%% file: main.tex
\documentclass[acmsmall,screen]{acmart}

\usepackage{adjustbox}

\usepackage{ulem}
\usepackage{caption,subcaption}
\usepackage{svg}
\usepackage{galois}
\usepackage{amsmath}
\usepackage{color,xcolor}
\usepackage{multirow}
\usepackage{tcolorbox}
\usepackage{listings}
\usepackage[ruled,linesnumbered,noend]{algorithm2e}

\SetCommentSty{mycommfont}

\usepackage{pifont}
\usepackage[utf8]{inputenc}
\usepackage{cleveref}
\crefname{section}{§}{§§}
\Crefname{section}{§}{§§}

\usepackage{tcolorbox}
\newcommand{\mybox}[1]{
	\begin{tcolorbox}[
		boxsep=-2.5pt,
		standard jigsaw,
		boxrule=0.6pt,
		opacityback=0,
		sharp corners]
		#1
	\end{tcolorbox}
}

\usepackage{listings}
\lstdefinelanguage{cpp}
{alsoletter={-, =},
	morekeywords={assume, assert, int, i32, i64, u32, u64, while, for, if},
	sensitive=false,
	morecomment=[l]{;}
}
\newcommand{\mytodoblue}[1]{\textcolor{blue}{\ding{46}~{\sf}~#1}}
\newcommand{\mytodored}[1]{\textcolor{red}{\ding{46}~{\sf}~#1}}

\newif\ifshowcomments
\newif\ifshowdeleted
\showcommentsfalse
\showdeletedfalse

\ifshowcomments
\newcommand{\yao}[1]{\mytodored{[yao: #1]}}
\newcommand{\zuo}[1]{\mytodoblue{[zuo: #1]}}
\newcommand{\revise}[1]{\mytodobluetwo{#1}}
\else
\newcommand{\yao}[1]{}
\newcommand{\zuo}[1]{}
\newcommand{\revise}[1]{#1}
\fi

\ifshowdeleted
\newcommand{\delete}[1]{\sout{#1}}
\else
\newcommand{\delete}[1]{}
\fi

\theoremstyle{plain}
\newtheorem{remark}{Remark}[section] 

\newcommand{\ToolName}{InvFinder}

\begin{document}
	
\title{A Fresh Look at Best Inductive Loop Invariant Synthesis for Bit-Vector Relations}
     
\author{Hanrui Zuo}
\orcid{0009-0009-2718-7606}
\affiliation{
  \institution{The State Key Laboratory of Blockchain and Data Security, Zhejiang University}
  \country{China}
}
\email{zarin@zju.edu.cn}        

\author{Peisen Yao}
\orcid{0000-0003-0342-9518}     
\authornote{Corresponding author}
\affiliation{
  \institution{The State Key Laboratory of Blockchain and Data Security, Zhejiang University}      
	\country{China}                   
}
\email{pyaoaa@zju.edu.cn}         

\author{Kui Ren}
\orcid{0000-0002-1969-2591}             
\affiliation{
  \institution{The State Key Laboratory of Blockchain and Data Security, Zhejiang University}       
	\country{China}                   
}
\email{kuiren@zju.edu.cn}         

\begin{abstract}
Synthesizing best inductive invariants (BII) is fundamental to program analysis and verification, yet existing approaches face significant efficiency challenges. We introduce a new formulation for the problem through the lens of mathematical optimization over quantified constraints in first-order theories. 
The formulation offers a constructive and operational
perspective on the BII problem and opens new algorithmic avenues.
Building on this formulation, we present two new algorithms for bit-vector programs: a strategically guided linear search that exploits the lattice structure and a bitwise greedy approach \revise{\delete{whose solver-call count is linear in bit-width (and logarithmic in the size of each row's value space) by carefully selecting pivot elements} that resolves bound bits from high to low with a solver-call count linear in bit-width}.
We evaluate our approach on a comprehensive benchmark suite, demonstrating significant performance improvements over conventional methods based on symbolic abstraction and chaotic iteration. 
Experimental results demonstrate our approach solves up to 86\% more benchmarks than baseline methods, with improved scaling in solver-call count for high bit-widths and improved verification effectiveness when integrated with k-induction.
\end{abstract}

\begin{CCSXML}
<ccs2012>
   <concept>
       <concept_id>10003752.10010124.10010138.10010143</concept_id>
       <concept_desc>Theory of computation~Program analysis</concept_desc>
       <concept_significance>500</concept_significance>
       </concept>
   <concept>
       <concept_id>10003752.10010124.10010138.10010144</concept_id>
       <concept_desc>Theory of computation~Abstract interpretation</concept_desc>
       <concept_significance>500</concept_significance>
       </concept>
 </ccs2012>
\end{CCSXML}

\ccsdesc[500]{Theory of computation~Program analysis}
\ccsdesc[500]{Theory of computation~Abstract interpretation}

\maketitle

\input{1.intro}
\input{2.background}

\input{3.motivation}

\input{4.framework}

\input{5.design}

\input{6.evaluation}

\input{7.related}
\input{8.conclu}

\bibliographystyle{ACM-Reference-Format}
\bibliography{sigproc,synthesis,inv,tmp}

\end{document}

%% file: 1.intro.tex
\section{Introduction}
\label{sec:introduction}
A logical assertion at a program location is \revise{\delete{a} an} \textit{invariant} if it is always satisfied by the values of the program variables whenever the location is reached during program execution. 
The generation of invariants has been key to the proof and analysis of crucial properties, such as
non-interference~\cite{di2008relational,DBLP:journals/pacmpl/ChenWFBD19} and complexity~\cite{Nguyen2012,Alias2010}. As a result, generating invariants has become a cornerstone of program analysis and verification.
Over the years, numerous techniques have been proposed to automate invariant generation, such as abstract interpretation~\cite{cousot1976static,cousot1978automatic,reps2004symbolic,li2014symbolic}, the constraint-based approach~\cite{colon2003linear}, 
IC3/PDR~\cite{cimatti2014ic3},
recurrence analysis~\cite{BreckCKR20,KincaidCBR18,kovacs2004automated,KincaidBBR17}, and machine learning~\cite{si2020code2inv,garg2014ice,sharma2012interpolants,xu2020interval}.

One of the foundational frameworks for deriving invariants is abstract interpretation~\cite{cousot1976static,cousot1978automatic}, which systematically over-approximates behaviors using abstract domains--mathematical structures designed to represent sets of program states. A core concept in abstract interpretation is the Galois connection $\mathcal{C} \galois{\alpha}{\gamma} \mathcal{A}$.
Here, $\alpha$ and $\gamma$ represent the abstraction and concretization functions, which map between the concrete domain $\mathcal{C}$ and the abstract domain $\mathcal{A}$.

The choice of the abstract domain fundamentally limits the precision of abstract interpretation. For a given abstract domain $\mathcal{A}$, the best abstract transformer (BAT) defines the most precise achievable abstraction. The BAT for a concrete transformer $f: \mathcal{C} \to \mathcal{C}$ is the most precise abstract operator that over-approximates the concrete operator. Formally, the BAT is defined as \revise{\delete{$\widehat{f} = \alpha \circ f \circ \gamma$} $f^\star = \alpha \circ f \circ \gamma$}. This equation defines the theoretical precision limit achievable within the abstract domain $\mathcal{A}$. However, it is a non-constructive definition that does not directly yield a practical algorithm for computing BATs.
Prior work has explored BAT computation for finite-height domains~\cite{reps2004symbolic}, template linear domains~\cite{monniaux2009automatic,brauer2010automatic}, and polyhedral domains~\cite{thakur2012method}.

\smallskip
\noindent \textbf{Best Inductive Invariants (BIIs)}.
The best inductive invariant (BII) \cite{thakur2015posthat} for a program point is the most precise over-approximation of the program's reachable states that satisfies the program's inductiveness constraints within the abstract domain.
For example,  Houdini \cite{Flanagan01}  solves a specific version of the BII problem, focusing on inferring conjunctive invariants from a predefined set of predicates.
However, the Houdini paper does not frame its approach in the abstract interpretation framework and does not apply to conventional numerical domains.

Unlike a sufficient invariant, which is tailored to a particular post-condition, the BII is determined only by the transition relation and the abstract domain. This distinction matters for at least three reasons. First, the BII is the theoretical precision ceiling of an abstract domain, and therefore a principled reference point for evaluating domain design. Second, BIIs are query-independent summaries that can be reused across multiple verification clients and properties. Third, stronger auxiliary invariants can improve downstream proof procedures, e.g., $k$-induction, to reduce the required inference depth.

The problem of computing Best Inductive Invariants (BII) has been the focus of significant research and can be broadly classified into two main directions: (1) Work focused on synthesizing the best abstract transformer (BAT) \cite{Graf97,Regehr04,reps2004symbolic,Yorsh04,brauer2011transfer,King10,Monniaux10,thakur2012bilateral,thakur2012method}, which provides a solution to the BII problem; (2)
Research specifically addressing the BII problem itself~\cite{Flanagan01,Yorsh06,Garoche12}, often targeting specific domains.
Most existing techniques for solving the BII problem rely on fixed-point computation methods, in which iterative refinement continues until convergence. Symbolic abstraction~\cite{reps2004symbolic} is often employed to compute the best abstract transformers during these iterations to ensure optimality. However, this approach faces several challenges: computing the best abstract transformers is computationally expensive, and the Kleene-style iterative refinement process can suffer from slow convergence.

\smallskip
\noindent \textbf{This Work}.
We address the computation of the Best Inductive Invariants (BII) within abstract domains over bit-vector arithmetic~\cite{regehr2006deriving,reps2006intermediate,brauer2011transfer}. The domain is particularly well-suited for reasoning about machine integer semantics, including wrap-around behaviors and bitwise operations, with practical applications in analyzing eBPF bytecode, x86 binaries, and hardware designs. Conventional techniques for synthesizing BII often encounter significant efficiency challenges when applied to such domains, as observed in our experiments.

In this paper, we revisit the BII problem and introduce a fundamentally different formulation. Drawing inspiration from constraint-based invariant generation~\cite{colon2003linear,sankaranarayanan2005scalable}, we express BII synthesis as a constrained optimization problem, precisely characterizing BIIs within a well-founded lattice of candidate invariants. This formulation is declarative and parameterized by the abstract domain, without relying on symbolic abstraction over loop-free fragments as a subroutine. Crucially, it is also constructive, serving as the foundation for new algorithmic strategies.

We instantiate this formulation using a \emph{propose-and-refine} framework that performs a directed search over the abstract domain. We introduce two algorithms within this framework: a baseline \textit{linear search} (\cref{sec:framework}) that applies iterative coordinate descent, and an advanced \textit{bitwise greedy} strategy (\cref{sec:design}) inspired by binary lifting~\cite{bender2004level}. The bitwise approach exploits the domain's bit-level structure to asymptotically reduce the worst-case number of solver invocations. Furthermore, we incorporate optimization strategies that leverage intermediate results to construct under-approximations and accelerate convergence.
Using a diverse benchmark suite, we evaluate our algorithms against several baseline methods based on symbolic abstraction~\cite{yao2021program,thakur2012bilateral}. Our approach solves up to 86\% more benchmarks, with the bitwise greedy strategy achieving a speedup of up to $17.9\times$ over \revise{\delete{the best symbolic abstraction algorithm} the bilateral approach by Thakur et al.\cite{thakur2012bilateral}, the primary symbolic-abstraction baseline, which has the better aggregate runtime and generally fewer checks in our reported comparison}. The proposed optimizations reduce solver calls by orders of magnitude, yielding substantial efficiency gains, particularly for higher bit-widths (64-bit and 128-bit variables). An ablation study demonstrates the necessity of each optimization in attaining these improvements. Additionally, integrating our approach with $k$-induction increases verification effectiveness by improving provability while reducing induction depth and verification time.

To summarize, we make the following main contributions:
\begin{itemize}
    \item We introduce a new formulation for best inductive invariant (BII) synthesis that presents BIIs as query-independent, domain-optimal invariants and casts their computation as a constrained optimization problem.
    \item We present two new algorithms for solving the problem applicable to a wide range of abstract domains over bit-vector arithmetic.
    \item We conduct a thorough empirical evaluation comparing our algorithms against symbolic abstraction baselines and show that the inferred BIIs strengthen downstream $k$-induction on a comprehensive benchmark suite. Our tool and benchmarks are available at \url{https://anonymous.4open.science/r/InvFinder-6CA2}.
\end{itemize}

%% file: 2.background.tex
\section{Preliminaries}
\label{sec:preliminaries}

This section first introduces the basic notions of inductive loop invariants and then reviews two families of approaches to invariant inference.

\smallskip 
\noindent \textbf{Inductive Loop Invariant}.
\label{subsec:inv}
For a given loop \texttt{\textbf{while} $G$ \textbf{do} $T$}, the loop invariant inference problem aims to identify a loop invariant $I$ that satisfies
    
\begin{equation}
\frac{Pre \implies Inv \quad \{ Inv \land G \}T\{Inv\} \quad (Inv \land \neg G) \implies Post}{\{Pre\} \texttt{ \textbf{while} $G$ \textbf{do }$T$ } \{Post\}}
\label{eq:loop}
\end{equation}

where $Pre$ is the \textit{pre-condition}, $Post$ is the \textit{post-condition}, and $T$ is the loop body with $G$ as the loop guard (i.e., the condition for loop continuation). The goal is to identify an invariant, $Inv$, that satisfies the following three key properties:
\begin{itemize}
    \item \textit{Init}: The loop invariant, $Inv$, must include all program states reachable by the code executed before the loop. 

    \item \textit{Inductiveness}: For every program state within the loop invariant, after one loop execution, the resulting program state must also satisfy the invariant.
  
    \item \textit{Provability}: When the loop exits (i.e., the guard $G$ evaluates to false), the program states described by the invariant must satisfy the intended post-condition, $Post$. 
\end{itemize}

In the verification community, numerous efforts over the last decade have focused on restricting the search space to invariants that suffice for a fixed post-condition. That direction is appropriate when the objective is to prove a single query as quickly as possible.
In this work, we instead focus on the first two conditions to infer the best inductive invariants. This choice deliberately separates invariant inference from any particular post-condition: the resulting invariant is query-independent and reusable across clients.


\smallskip
\noindent \textbf{\revise{\delete{Abstraction} Abstract} Interpretation for Invariant Inference}.
\label{subsec:absint}
Abstract interpretation is a general framework that enables sound over-approximation of program behaviors. 
Given two complete lattices $(\mathcal{C}, \leq_\mathcal{C})$ and $(\mathcal{A}, \leq_\mathcal{A})$, a pair of functions—an abstraction function $\alpha : \mathcal{C} \to \mathcal{A}$ and a concretization function $\gamma : \mathcal{A} \to \mathcal{C}$—forms a \textit{Galois connection} if, for any element $c \in \mathcal{C}$ and $a \in \mathcal{A}$, the following equivalence holds: $\alpha(c) \leq_\mathcal{A} a \Leftrightarrow c \leq_\mathcal{C} \gamma(a).$

The semantics of a program is defined as the smallest solution of a recursive system of semantic equations $F$. Hence, the abstract program semantics is a
set of states $A$ of a lattice $\langle \mathcal{A}, \sqsubseteq_\mathcal{A}\rangle$
such that $A = F(A)$ where $F$ is a monotone abstract transformer. 
The solution $A$ is
iteratively constructed by $A_{i+1} = A_i \sqcup F(A_i)$, starting
from $A_0 = \bot$. The value $\bot$ denotes the smallest element of
$\mathcal{A}$ and the operation $\sqcup$ denotes the join operation of
$\mathcal{A}$. The sequence $(A_n)$ defines an ascending chain of elements of
$\mathcal{A}$. 
This chain may be infinite, so to enforce the convergence of this sequence, we may need to substitute the operator $\sqcup$ by a \textit{widening operator} $\nabla$ that over-approximates $\sqcup$.

\smallskip
\noindent \textbf{Constraint-based Invariant Inference}.
\label{subsec:template}
To solve the invariant generation problem, 
a large body of work follows the paradigm of the \textit{constraint-based approach}~\cite{colon2003linear,sankaranarayanan2004constraint,sankaranarayanan2005scalable}  also referred to as the template-based approach.
At a high level, the approach rests on the following formulation:
\begin{equation}
\begin{aligned}
  \forall X . Pre(X) &\to Inv(X) \land \\
  \forall X, X' . Inv(X) \land G(X) \land T(X, X') &\to Inv(X'),
\end{aligned}
\label{eq:inv}
\end{equation}
where $X$ denotes the program variables, $X'$ denotes their next-state values, while $G(X)$ and $T(X, X')$ encode the loop guard and the loop body.
To generate invariants, the key idea is to use invariant templates to restrict the search space for $Inv$ and to find invariants that match the template by extracting and solving constraints.
Specifically, the approach involves the following steps.

\smallskip
\noindent \emph{Template Selection}.
The process begins by fixing a parameterized \emph{template} $\mathcal{T}(A, X)$ to characterize the structure of the potential invariant.
Here, $X$ denotes the vector of program variables, and $A$ denotes the vector of unknown parameters to be synthesized, which also constitutes an abstract element in the template domain $\mathcal{D}_T$.
For instance, the template can take the form of a conjunction of inequalities $\bigwedge_{i=1}^m f_i(X) \leq p_i$, where each $f_i$ is a fixed function over $X$ and the bounds $p_i$ constitute the parameters in $A$.
For brevity, we use the notation $A(X)$ to refer to the logical formula $\mathcal{T}(A, X)$, instantiated with the parameter $A$.

\smallskip
\noindent \emph{Constraint Encoding}.
Second, we generate constraints over the template parameters $A$ by enforcing the validation conditions (Eq.~\ref{eq:inv}) on the template instance $A(X)$.
We capture these requirements in a single logical predicate $P(A, X, X')$, which encodes both the initiation and consecution properties:
\begin{equation}
\label{eq:inductiveness_def}
P(A, X, X') \triangleq \big[\mathit{Pre}(X) \to A(X)\big] \land \big[A(X) \land G(X) \land T(X, X') \to A(X')\big].
\end{equation}
Consequently, finding a valid inductive invariant reduces to finding a parameter valuation $A$ that satisfies the formula:
\begin{equation}
\label{eq:template-inv}
\forall X, X' . P(A, X, X').
\end{equation}

\smallskip
\noindent \emph{Constraint Solving}.
Finally, the encoded constraint (Eq.~\ref{eq:template-inv}) can be solved to compute feasible values of the template parameters $A$, which also form a valid inductive invariant in $\mathcal{D}_T$.

%% file: 3.motivation.tex
\section{Problem Formulation}
\label{sec:formulation}

In this section, we first formalize the problem of best inductive invariant (BII) synthesis (\cref{subsec:bii}).
We then examine an existing general framework for solving the problem and discuss its limitations (\cref{subsec:bii:exiting}).
Finally, we present our formulation and outline our algorithmic contributions (\cref{subsec:bii:omt}).

\subsection{The Best Inductive Invariant}
\label{subsec:bii}

This work addresses the problem of computing the \emph{Best Inductive Invariant} (BII) with respect to a specific abstract domain. Informally, the BII represents the strongest possible invariant expressible within the domain that satisfies the program's inductive properties.

\begin{definition}[Best Inductive Invariant]
    \label{def:bii}
    For a given abstract domain $\mathcal{A}$ and the program encoded by $P(A,X,X')$ (Eq.~\ref{eq:inductiveness_def}), an abstract element $A^\star \in \mathcal{A}$ is the \emph{Best Inductive Invariant} if and only if:
    (1) \emph{Validity:} $A^\star$ is a valid inductive invariant: $\forall X, X' . \; P(A^\star, X, X')$;
    (2) \emph{Optimality:} $A^\star$ is the strongest among all valid \revise{inductive} invariants in $\mathcal{A}$: $\forall A \in \mathcal{A} . \; (\forall X, X' . \; P(A, X, X')) \implies A^\star \sqsubseteq A$.
\end{definition}

In other words, $A^\star$ is the \emph{least element} (in terms of the lattice order $\sqsubseteq$) of the set of all valid inductive invariants in $\mathcal{A}$.
The computation of the BII is of both theoretical and practical significance.  From a theoretical perspective,  it establishes the precision ceiling for any analysis constrained by a given abstract domain. In practice, BIIs enable the derivation of precise summaries for loops, functions, and other program constructs, which are critical for verification, bug detection, and other downstream tasks.

\begin{remark}
We would like to emphasize that our definition differs from the concept of ``strongest (polynomial) invariants'' commonly used in research on computability results, such as~\cite{karr76,OS04ICALP,joel2018polynomial,hrushovski23jacm,mullner24popl}.
These studies typically focus on computing \emph{sets of} affine or polynomial \emph{equalities} that serve as loop invariants. For a comprehensive overview of these related results, we recommend consulting~\cite{mullner24popl}. 
\end{remark}

\begin{example}
    The original predicate abstraction algorithm~\cite{Graf97} can be viewed as an instance of BII synthesis. 
    Given a set of predicates $S = \{P_1, \ldots, P_n\}$, each element in the abstract domain represents a Boolean formula over these predicates. Their fixed-point algorithm computes the most precise inductive invariant expressible in this domain.
\end{example}

\subsection{Existing Approach to BII Synthesis}
\label{subsec:bii:exiting}

In the most basic approach to solving the best inductive invariant (BII) problem, we assume we have a standard fixed-point solver performing chaotic iteration.
Compared to standard equation solvers, the basic idea is to adopt the best abstract transformers to improve the precision of each iteration and, ultimately, the precision of the inferred invariant.

\begin{algorithm}[t]
	\caption{Best invariant generation via chaotic iteration and  symbolic abstraction}
    \label{alg:bii:symabs}
	\KwIn{A transition system and abstract domain $\mathcal{A}$}
	\KwOut{The best inductive invariant $A^\star \in \mathcal{A}$}
	$A \gets \alpha(\{X|Pre(X)\})$\;
	\While{\revise{\delete{$A \neq \widehat{f}(A)$} $A \neq A \sqcup f^\star(A)$}}{
        \revise{\delete{$A \gets A \sqcup \widehat{f}(A)$} $A \gets A \sqcup f^\star(A)$}\tcp*[l]{apply the symbolic abstraction operator \revise{\delete{$\widehat{f}(A)$} $f^\star(A)$}}
	}
	\Return{$A$}\;
\end{algorithm}

\smallskip
\noindent \textbf{Best Abstract Transformer}. Given a concrete transformer $f : \mathcal{C} \to \mathcal{C}$, the \textit{best abstract transformer} \revise{\delete{$\widehat{f} : \mathcal{A} \to \mathcal{A}$} $f^\star : \mathcal{A} \to \mathcal{A}$} that over-approximates $f$ is defined as \revise{\delete{$\widehat{f} = \alpha \circ f \circ \gamma : \mathcal{A} \to \mathcal{A}.$} $f^\star = \alpha \circ f \circ \gamma : \mathcal{A} \to \mathcal{A}$.}
This is the most precise sound abstraction of $f$ in the abstract domain $\mathcal{A}$ since, for any other sound abstraction $f^\#$, it holds that \revise{\delete{$\widehat{f}(A) \leq_\mathcal{A} f^\#(A)$} $f^\star(A) \leq_\mathcal{A} f^\#(A)$} for all $A \in \mathcal{A}$.
However, \revise{\delete{this definition is limited for two reasons} two distinct practical issues remain}:
\begin{itemize}
    \item It is \textit{non-constructive}, meaning it does not necessarily provide an algorithm to (1) compute an explicit representation of the most precise transfer function \revise{\delete{$\widehat{f}$} $f^\star$} or (2) to apply the function to abstract states and obtain the results.
    \item \revise{Best abstractions are not compositional in general.} The composition of the best abstractions of two functions $f$ and $g$ does not always yield the best abstraction of their composition $f \circ g$, a known limitation of abstract interpretation.
\end{itemize}

Symbolic abstraction~\cite{reps2004symbolic} provides a practical mechanism for computing best abstract transformers. Given a formula $\varphi$ that represents the concrete semantics and an abstract domain $\mathcal{A}$, symbolic abstraction computes the best approximation of $\varphi$ as an element in $\mathcal{A}$ (i.e., the strongest consequence of a formula $\varphi \in \mathcal{L}$ expressible within an abstract domain $\mathcal{A}$.)
Depending on the context, the formula $\varphi \in \mathcal{L}$ may encode different language constructs, such as a concrete transformer for an instruction, a basic block, or a loop-free program fragment.

\begin{remark} The existence of best abstract transformers is not guaranteed for all abstract domains~\cite{cousot1978automatic,cousot1995formal,cousot2011logical}. In such cases, only one direction of the Galois connection may be maintained. Our work focuses exclusively on abstract domains where the best abstract transformers are well-defined. 
\end{remark}

\smallskip
\noindent \textbf{Synthesizing BII via Symbolic Abstraction}. 
Algorithm~\ref{alg:bii:symabs} illustrates the standard procedure for computing BIIs through symbolic abstraction within a fixed-point iteration framework. For abstract domains of finite height, Kleene iteration without widening is sufficient, generating increasingly precise under-approximations until a fixed point is reached.

\begin{theorem}[\revise{Correctness and complexity of Algorithm~\ref{alg:bii:symabs}}]
\label{thm:symabs-correct}
\revise{\delete{For an abstract domain with no infinite ascending chains, Algorithm~\ref{alg:bii:symabs} is guaranteed to terminate and compute the best inductive invariant expressible in the domain~\cite{thakur2014symbolic}.} Assume that $f^\star$ is the exact best abstract transformer and that the part of $\mathcal{A}$ reachable from $\alpha(\mathit{Pre})$ has finite height $h$. Algorithm~\ref{alg:bii:symabs} makes at most $h$ strict updates, and returns the BII expressible in $\mathcal{A}$~\cite{thakur2014symbolic}.}

\end{theorem}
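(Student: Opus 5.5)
The argument combines a monotone ascending chain with a fixed-point characterization, resting on the Galois connection properties from the preliminaries. Write $A_0 = \alpha(\mathit{Pre})$ and $A_{i+1} = A_i \sqcup f^\star(A_i)$, where $f^\star = \alpha \circ f \circ \gamma$ and $f$ is the concrete post-image of one loop iteration, restricted to the guard $G$ and body $T$.

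\emph{Termination and the bound $h$.} First I would note that $A_i \sqsubseteq A_i \sqcup f^\star(A_i) = A_{i+1}$, so the iterates form an ascending chain. Every strict update makes the chain strictly increase inside the part of $\mathcal{A}$ reachable from $\alpha(\mathit{Pre})$. That part has height $h$, so there can be at most $h$ strict updates. When no update is strict, the loop test $A = A \sqcup f^\star(A)$ holds, so the algorithm halts with some $A_n$ satisfying $f^\star(A_n) \sqsubseteq A_n$.

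\emph{Validity.} Initiation follows from extensivity of $\gamma\circ\alpha$: $\mathit{Pre} \subseteq \gamma(\alpha(\mathit{Pre})) = \gamma(A_0) \subseteq \gamma(A_n)$, using monotonicity of $\gamma$ along the chain. For consecution, the Galois connection gives $f(\gamma(A_n)) \subseteq \gamma(\alpha(f(\gamma(A_n)))) = \gamma(f^\star(A_n)) \subseteq \gamma(A_n)$. Unfolding $f$, this is exactly $\forall X,X'.\,A_n(X)\land G(X)\land T(X,X')\to A_n(X')$, so $P(A_n,X,X')$ holds for all $X, X'$.

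\emph{Optimality.} This is the key step. Take any valid $B$, meaning $\mathit{Pre}\subseteq\gamma(B)$ and $f(\gamma(B))\subseteq\gamma(B)$. I would prove $A_i \sqsubseteq B$ by induction on $i$.
\begin{itemize}
\item Base case: $\alpha(\mathit{Pre}) \sqsubseteq B$ follows from $\mathit{Pre}\subseteq\gamma(B)$ by the adjunction.
\item Step: if $A_i\sqsubseteq B$, then $f^\star(A_i) = \alpha(f(\gamma(A_i))) \sqsubseteq \alpha(f(\gamma(B)))$ by monotonicity of $\alpha$, $f$ and $\gamma$. Next, $\alpha(f(\gamma(B))) \sqsubseteq B$ by the adjunction applied to $f(\gamma(B))\subseteq\gamma(B)$. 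Hence the join $A_{i+1} = A_i \sqcup f^\star(A_i) \sqsubseteq B$.
\end{itemize}
So $A_n \sqsubseteq B$, and $A_n$ is the least valid element, i.e.\ the BII of Definition~\ref{def:bii}.

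The main obstacle is this optimality step. It depends on the exactness of $f^\star$: an over-approximate $f^\#$ would break the inequality $f^\star(A_i)\sqsubseteq B$. It also depends on identifying the syntactic inductiveness formula $P$ with the semantic conditions $\mathit{Pre}\subseteq\gamma(B)$ and $f(\gamma(B))\subseteq\gamma(B)$. I would state this correspondence explicitly, taking $\gamma(B)=\{X\mid B(X)\}$ and $f(S)=\{X'\mid \exists X\in S.\,G(X)\land T(X,X')\}$, before running the induction.
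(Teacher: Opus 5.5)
Your proposal is correct and follows essentially the same route as the paper's proof. The iterates ascend, so finite height bounds the strict updates by $h$; an induction places every iterate below any inductive element; and stabilization, $f^\star(A)\sqsubseteq A$, yields inductiveness. The difference is that you make explicit the Galois-connection steps the paper compresses into ``$f^\star(I)\sqsubseteq I$'': extensivity of $\gamma\circ\alpha$, the adjunction, and the correspondence between $P$ and $f(\gamma(B))\subseteq\gamma(B)$. You also correctly attribute the ascent to the join, reserving monotonicity for the induction step.
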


\begin{proof}
\revise{Let $A_0=\alpha(\mathit{Pre})$ and $A_{n+1}=A_n\sqcup f^\star(A_n)$. Monotonicity of $f^\star$ makes $(A_n)_n$ ascending. For any inductive element $I$, initiation gives $A_0\sqsubseteq I$, and $f^\star(I)\sqsubseteq I$; induction therefore gives $A_n\sqsubseteq I$ for every $n$. Finite height ensures stabilization after at most $h$ strict updates. At stabilization, $f^\star(A)\sqsubseteq A$, so $A$ is inductive; because $A\sqsubseteq I$ for every inductive $I$, it is the BII by Definition~\ref{def:bii}.}
\end{proof}


\begin{example}
The predicate abstraction algorithm in~\cite {Graf97} implements this approach by computing, at each iteration, the strongest consequence expressible using the given predicates. Various optimizations have been proposed to accelerate symbolic abstraction~\cite {lahiri2006smt}, but the general framework remains unchanged.
\end{example}

Despite its theoretical elegance, the symbolic abstraction-based framework has several limitations: (1) the prohibitive expense of computing best abstract transformers, especially for complex domains; (2) the potential slow convergence of Kleene-style iterations; and (3) limited resilience to timeouts, i.e., returning non-trivial results (e.g., non-$\top$)  when computation cannot be completed.

\begin{remark} Thakur et al.~\cite{thakur2012bilateral} proposed an ``anytime algorithm'' for symbolic abstraction, producing a nontrivial result in a timeout. 
However, the overall invariant inference process still requires a fixed-point iteration, and generating a nontrivial invariant remains challenging for timeout resilience. \end{remark}

\subsection{Problem Statement}
\label{subsec:bii:omt}
\noindent \textbf{Constraint-based BII Synthesis}.
This work revisits the problem of Best Inductive Invariant (BII) synthesis from a constraint-optimization perspective. Inspired by constraint-based program analysis~\cite{colon2003linear,sankaranarayanan2004constraint,sankaranarayanan2005scalable}, we recast the BII problem as a direct optimization problem over the abstract domain $\mathcal{A}$ and the partial order $\sqsubseteq$ of its abstract elements:

\mybox{
We define the \emph{lattice rank} of an element $A \in \mathcal{A}$, denoted $rank_{\mathcal{A}}(A)$, as the length of the longest ascending chain from the bottom element $\bot$ to $A$. The BII synthesis problem is defined as:

\begin{align*}
  \textbf{Minimize } & \quad rank_{\mathcal{A}}(A) \\
  \textbf{Subject to } & \quad \forall X, X' . \; P(A, X, X'),
\end{align*}

where $P(A, X, X')$ denotes the inductiveness condition defined in Eq.~\ref{eq:inductiveness_def}.
}

This formulation characterizes the BII as the most precise (lowest-ranked) valid \revise{inductive} invariant within the lattice. This approach offers several key advantages:
\begin{itemize}
    \item It is \emph{declarative} and parameterized by the abstract domain $\mathcal{A}$ and the partial order $\sqsubseteq$, providing a precise characterization of BIIs without invoking symbolic abstraction as a primitive.
    \item  It is \emph{constructive}, enabling the use of algorithmic strategies that can circumvent the convergence limitations of standard fixed-point iteration.
\end{itemize}

\smallskip
\noindent \textbf{Template Domains over Bit-Vectors}.
We focus on \emph{template-based abstract domains} over bit-vector variables. 
They admit a uniform vector representation, allowing us to treat invariant synthesis as an optimization problem.

\begin{definition}[Template Domain]
\label{def:template-domain}
Let $X = (x_1, \dots, x_n)$ be the vector of program variables, and $F(X) = [f_1(X), \dots, f_m(X)]^\top$ be a fixed vector of template functions, where each $f_i$, linear or non-linear, maps the variables to a bit-vector of width $\mathsf{bw}_i$. The \emph{template domain} $\mathcal{D}_T$ consists of all abstract elements $A$ defined by a pair of constant bound vectors \revise{\delete{$L, U \in \mathbb{BV}^m$} $L,U \in \prod_{i=1}^{m}\mathbb{BV}_{\mathsf{bw}_i}$}:
$$
    A(X) \iff L \le F(X) \le U,
$$
where the inequality holds component-wise (i.e., $\forall i. \, l_i \le f_i(X) \le u_i$).
\end{definition}

\revise{Note the vector $F(X)$ is chosen by the client as part of the abstract domain. The client-defined functions determine the semantics of the template: each $f_i$ may have its own specified bit-width and arithmetic semantics, e.g., modular overflow and signedness. In the following sections, each template function $f_i$ and its bounds are interpreted as unsigned bit-vectors, and comparisons use the standard unsigned numerical order.}

The template domain $\mathcal{D}_T$ then forms a finite-height lattice ordered by the tightness of the interval bounds. The partial order $\sqsubseteq$ is defined such that $A \sqsubseteq A'$ if and only if its bounds are contained within those of $A'$(i.e., $L' \le L$ and $U \le U'$). Naturally, the lattice rank function that serves as our optimization target can be expressed as the sum of the interval widths, \revise{\delete{$i.e.,$} interpreted over mathematical integers, i.e.,} $rank_{\mathcal{D}_T}(A) =\sum_{i=1}^{m} (u_i - l_i + 1)$.
According to Definition~\ref{def:template-domain}, each inequality $l_i \le f_i(X) \le u_i$ constrains only the scalar projection defined by $f_i$. 
We will exploit \revise{\delete{this component-wise independence attribute} the component-wise nature of the optimization target} by computing the optimal bounds $(l_i,u_i)$ for each dimension $i$\revise{\delete{separately}}.

\smallskip
\noindent \textbf{Solving the BII Problem}.
Our formulation can be viewed as a domain-specific instance of the Optimization Modulo Theories (OMT) problem~\cite{Sebastiani:2017:OMT:3080455.3080473,sebastiani2015optimathsat,vz,sebastiani2015optimization}, which generalizes SMT by determining models that minimize a given objective function.
\revise{\delete{However, standard OMT solvers are ill-equipped to handle the complex quantified constraints inherent to invariant synthesis.} Although this formulation is intuitively an OMT solvable instance, direct optimization requires solving an $\exists \forall$ formula: the template bounds are selected existentially while inductiveness must hold universally. This quantified coupling makes each optimization step expensive and provides no predictable finite refinement sequence.}

To address this gap, we propose a \textit{propose-and-refine} framework and two instantiations.
The first (\cref{sec:framework}) is a \textit{linear search} strategy over the lattice structure, iteratively refining the invariant by generating neighbors.
The second (\cref{sec:design}) is a \textit{bitwise greedy} strategy that leverages the domain's bit-level structure. 
Crucially, both strategies operate ``top-down'' in the lattice, ensuring that any intermediate result remains a sound invariant.

%% file: 4.framework.tex
\section{A Propose-and-Refine Framework for BII Synthesis}
\label{sec:framework}

The constraint-based formulation of BII synthesis naturally suggests a refinement-based approach: compute the strongest inductive invariant by iteratively tightening an initial over-approximation. We propose a \emph{propose-and-refine} framework that casts BII synthesis as a directed search over a lattice-structured abstract domain.

We formalize BII synthesis as a directed search over the lattice induced by the abstract domain $\mathcal{D}_T$. The procedure maintains a current abstract element $A$, initialized to the trivial invariant $\top$, and monotonically tightens $A$ while preserving inductiveness.
The framework is parameterized by two operators:
\begin{enumerate}
    \item $\textbf{Propose}(A)$: Generates a finite set of candidate abstract elements $\mathcal{C}$ that are strictly smaller than the current abstract element $A$ (i.e., $\forall C \in \mathcal{C}, C \sqsubset A$). These candidates represent potential directions for refinement (e.g., cutting a specific dimension).
    
    \item $\textbf{Refine}(A, \mathcal{C})$: Attempts to refine the current invariant $A$ using candidates in $\mathcal{C}$. It returns a pair $(A', \textit{stop})$, where $A'$ is the updated invariant (equal to $A$ if refinement failed, or tighter if successful) and $\textit{stop}$ is a boolean signal indicating whether the search should terminate immediately (if $A$ is proven to be the best inductive invariant).
\end{enumerate}

\begin{figure*}[t]
\centering
\begin{minipage}{0.42\textwidth}
\begin{algorithm}[H]
    \caption{Synthesizing BII via propose-and-refine}
    \label{alg:propose-refine}
    \small 
    \KwIn{Logic encoding $P(A, X, X')$}
    \KwOut{Best Inductive Invariant $A^{\star}$}

    \SetKwFunction{Propose}{Propose}
    \SetKwFunction{Refine}{Refine}
    \SetKwProg{Fn}{Function}{:}{}

    $A \gets \top$\;
    \tcp{Generate candidates}
    $\mathcal{C} \gets \Propose(A)$\;
    \While{$\mathcal{C} \neq \emptyset$}{
        \tcp{Attempt to refine A}
        $(A, stop) \gets \Refine(A, \mathcal{C})$\;
        \If{$stop$}{
            \tcp{Confirmed optimality}
            \textbf{break}\;
        }
        $\mathcal{C} \gets \Propose(A)$\;
    }
    \Return $A$\;
\end{algorithm}
\end{minipage}
\hfill
\begin{minipage}{0.57\textwidth}
\begin{algorithm}[H]
    \caption{A linear search strategy for BII}
    \label{alg:linear-search}
    \small
    \KwIn{Logic encoding $P(A, X, X')$}
    \KwOut{Best Inductive Invariant $A^{\star}$}

    \SetKwFunction{Propose}{Propose}
    \SetKwFunction{Refine}{Refine}
    \SetKwProg{Fn}{Function}{:}{}

    \Fn{\Propose{$A$}}{
        $\mathcal{C} \gets \emptyset$\;
        \ForEach{dimension $i \in \{1, \dots, m\}$}{
            \tcp{Propose tightening bounds}
            \tcp{\revise{if $l_i>u_i$ or overflowed, the candidate is $\bot$}}
            $C_{l} \gets A[l_i \leftarrow A.l_i + 1]$\;
            $C_{u} \gets A[u_i \leftarrow A.u_i - 1]$\;
            $\mathcal{C} \gets \mathcal{C} \cup \{C_{l}, C_{u}\}$\;
        }
        \Return $\mathcal{C}$\;
    }

    \Fn{\Refine{$A, \mathcal{C}$}}{
        \If{$\exists A' . (\bigvee_{C \in \mathcal{C}} A' \sqsubseteq C) \land \forall X, X' . P(A', X, X')$}{
            \Return $(A', \textbf{false})$\;
        }
        \Return $(A, \textbf{true})$\;
    }
\end{algorithm}
\end{minipage}
\end{figure*}

\begin{theorem}[\revise{Correctness of Algorithm~\ref{alg:propose-refine}}]
\label{thm:framework-correct}
\revise{For brevity, write $\mathsf{Inv}(A) \equiv \forall X,X'.\,P(A,X,X')$. Assume that the BII $A^\star$ exists and that $\mathsf{Inv}(\top)$ holds. Suppose:}
\begin{enumerate}
    \item \revise{if $\Refine(A,\mathcal C)=(A',\mathit{stop})$, then $\mathsf{Inv}(A')$ and $A'\sqsubseteq A$;}
    \item \revise{if $\Propose(A)=\emptyset$, then no $B\sqsubset A$ satisfies $\mathsf{Inv}(B)$;}
    \item \revise{if $\Refine(A,\mathcal C)=(A',\mathit{true})$, then no $B\sqsubset A'$ satisfies $\mathsf{Inv}(B)$.}
\end{enumerate}
\revise{Whenever Algorithm~\ref{alg:propose-refine} terminates, it returns $A^\star$. If the abstract domain has descending height $h$ below $\top$, then $A$ is strictly refined at most $h$ times.}
\end{theorem}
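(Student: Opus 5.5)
The plan is to establish a loop invariant for Algorithm~\ref{alg:propose-refine}: at every point where the guard is tested, the current element $A$ satisfies $\mathsf{Inv}(A)$ and $A^\star \sqsubseteq A$. Initially $A=\top$, so $\mathsf{Inv}(\top)$ holds by assumption and $A^\star\sqsubseteq\top$ is trivial. For preservation, suppose $\Refine(A,\mathcal C)$ returns $(A',\mathit{stop})$. Hypothesis~(1) gives $\mathsf{Inv}(A')$ and $A'\sqsubseteq A$. Since $A^\star$ is the BII, the Optimality clause of Definition~\ref{def:bii} applied to the valid $A'$ yields $A^\star\sqsubseteq A'$. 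Note that this step uses only optimality of $A^\star$ and the validity of the new element, not the old bound $A^\star\sqsubseteq A$. So the invariant is really just $\mathsf{Inv}(A)$, and the comparison with $A^\star$ follows from it.

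Next I would handle the two ways the loop can exit. In the first, the guard fails because $\Propose(A)=\emptyset$; hypothesis~(2) then says no $B\sqsubset A$ is valid. In the second, the loop breaks with $\mathit{stop}=\mathbf{true}$ right after $A$ has been updated to $A'$; hypothesis~(3) gives the same conclusion for the returned element. In either case the returned $A$ satisfies $\mathsf{Inv}(A)$ and $A^\star\sqsubseteq A$. Suppose $A^\star\neq A$. Then $A^\star\sqsubset A$, and $\mathsf{Inv}(A^\star)$ holds by the Validity clause, contradicting the exit hypothesis. Hence $A=A^\star$. This is the step where I must be careful to apply (3) to the updated element $A'$, which is exactly what the code returns after \textbf{break}.

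For the bound on refinements, I would observe that each strict refinement replaces $A$ by some $A'\sqsubset A$. The successive distinct values of $A$ therefore form a strictly descending chain starting at $\top$. By the definition of descending height $h$ below $\top$, such a chain has at most $h$ strict steps, so at most $h$ strict refinements occur. Iterations where $\Refine$ returns $A'=A$ without stopping are not counted, which matches the phrasing ``strictly refined at most $h$ times''. The statement deliberately claims only partial correctness (``whenever it terminates''), so I need not rule out such non-progress loops.

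The main obstacle is mostly bookkeeping rather than depth. The essential care points are that optimality needs only validity of the current $A$, and that the exit arguments are matched correctly to hypotheses~(2) and~(3). I would state these explicitly to avoid circularity.
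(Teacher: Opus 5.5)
Your proposal is correct and follows essentially the same argument as the paper. Condition~1 keeps $\mathsf{Inv}(A)$ along a descending sequence from $\top$, optimality of $A^\star$ gives $A^\star\sqsubseteq A$, Conditions~2 and~3 at the two exits rule out $A^\star\sqsubset A$, and the descending height bounds the strict refinements by $h$; your explicit matching of Condition~3 to the element returned after \textbf{break} is just a more careful version of the paper's ``whichever termination condition is used.''
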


\begin{proof}
\revise{Initially, $A=\top$ is inductive. By Condition~1, every subsequent value of $A$ remains inductive and the sequence of current elements is descending. Let $A$ be the returned element. By Conditions~2 and~3, whichever termination condition is used, no strictly tighter inductive element exists below $A$. Since $A^\star$ is the BII and $A$ is inductive, $A^\star\sqsubseteq A$. If $A^\star\neq A$, then $A^\star\sqsubset A$, contradicting the termination condition. Hence $A=A^\star$. Finally, finite descending height bounds the number of strict refinements by $h$.}
\end{proof}


Algorithm~\ref{alg:propose-refine} presents the high-level procedure. The algorithm decouples the search strategy ($\Propose$) from the verification and update logic ($\Refine$). By embedding the check within $\Refine$, we allow the operator to flexibly handle solver feedback depending on the specific instantiation. Note that the algorithm is naturally \emph{anytime}, always maintaining a valid invariant $A$ that can be used for further verification even if it is terminated before it can fully compute the BII.

\begin{remark}
The framework is complete for abstract domains in which the initial over-approximation $\top$ has finite lattice rank with respect to the measure in \cref{subsec:bii:omt}, e.g., the bit-vector template domains considered in this paper. By contrast, domains like reals contain elements of infinite rank; for such domains, the framework cannot guarantee convergence to the BII, but still can be viewed as an anytime procedure for computing progressively tighter invariants.
\end{remark}

\smallskip
\noindent \textbf{Base Strategy: Linear Search}.
We first instantiate this framework with a \emph{Linear Search} strategy, as in Algorithm~\ref{alg:linear-search}, which corresponds to a coordinate descent on the lattice. This is the most intuitive approach, as it proposes abstract elements that cover all descendants of $A$.
\begin{itemize}
    \item $\textbf{Propose}(A)$: Generates the immediate neighbors of $A$ in the lattice. For a component $z_i$ with bounds $[l_i, u_i]$, it proposes candidates by incrementing lower bounds ($l_i+1$) or decrementing upper bounds ($u_i-1$).
    
    \item $\textbf{Refine}(A, \mathcal{C})$: Validates the candidate set $\mathcal{C}$ using an $\exists\forall$ solver. It returns a pair $(A', \textit{stop})$, where $A'$ is the updated invariant and $\textit{stop}$ is a boolean termination signal.
    \begin{itemize}
        \item If \textsc{Sat}, the solver returns a witness model $A'$. The operator returns $(A', \textit{false})$, effectively updating the invariant to $A'$ (where $A' \sqsubseteq C \sqsubset A$). The \textit{false} signal indicates that refinement was successful and the search should continue.
        \item If \textsc{Unsat}, it implies that no valid \revise{inductive} invariant exists within any of the proposed candidates. The operator returns $(A, \textit{true})$. The \textit{true} signal indicates that $A$ cannot be further refined, confirming it as the best inductive invariant and terminating the search.
    \end{itemize}
\end{itemize}

\begin{theorem}[\revise{Correctness and complexity of Algorithm~\ref{alg:linear-search}}]
\label{thm:linear-correct}
\revise{For the bit-vector template domain, Algorithm~\ref{alg:linear-search} returns the BII. It makes at most $1+\sum_{i=1}^{m}(2^{\mathsf{bw}_i}-1)$ solver queries; for uniform width $b$, this is $O(m2^b)$. Each query contains at most $2m$ candidate disjuncts.}
\end{theorem}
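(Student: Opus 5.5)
The plan is to derive correctness from Theorem~\ref{thm:framework-correct} by checking its three conditions for the linear-search operators of Algorithm~\ref{alg:linear-search}. The query bound then comes from a rank-decrease argument using the rank function $rank_{\mathcal{D}_T}(A)=\sum_i(u_i-l_i+1)$ of \cref{subsec:bii:omt}. The theorem's hypotheses hold in this setting. $\mathsf{Inv}(\top)$ is true because $\top$ imposes no constraint, so both initiation and consecution hold trivially. The BII exists because $\mathcal{D}_T$ is a finite lattice, and the meet of inductive template elements (component-wise $\max$ of lower bounds and $\min$ of upper bounds) is again inductive.

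\emph{Condition~1.} In the \textsc{Sat} branch the witness satisfies $\mathsf{Inv}(A')$ and $A'\sqsubseteq C\sqsubset A$ for some $C\in\mathcal{C}$. In the \textsc{Unsat} branch $A'=A$, which is inductive by induction on the loop.

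\emph{Condition~2.} As long as $A\neq\bot$, \Propose{} produces $2m\ge 2$ candidates. I will treat the degenerate case $A=\bot$ by convention: there \Propose{} yields no strictly smaller candidate, so $\mathcal{C}=\emptyset$. The condition then holds vacuously, since nothing lies strictly below $\bot$.

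\emph{Condition~3.} This is the key covering lemma. Take any $B\sqsubset A$ with $A\neq\bot$. Either $B=\bot$, or $B$ differs from $A$ in some dimension $i$ with $l_i^B>l_i$ or $u_i^B<u_i$. If $B=\bot$, then $B$ lies below every candidate, including the $\bot$ candidates produced on overflow or emptiness. Otherwise the bounds are integers, so $l_i^B\ge l_i+1$ or $u_i^B\le u_i-1$, and the remaining bounds of $B$ are contained in those of $A$; hence $B\sqsubseteq C_l$ or $B\sqsubseteq C_u$ for that $i$. So if the $\exists\forall$ query is \textsc{Unsat}, no inductive $B\sqsubset A$ exists. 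Theorem~\ref{thm:framework-correct} then gives that any terminating run returns $A^\star$, and termination follows from the counting below.

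For the query bound, every \textsc{Sat} answer strictly decreases $A$, and therefore strictly decreases the rank, by at least $1$. The initial rank is $rank(\top)=\sum_i 2^{\mathsf{bw}_i}$, and every nonempty element has rank at least $m$. So at most $\sum_i(2^{\mathsf{bw}_i}-1)$ successful refinements pass between nonempty elements. After that, exactly one more query occurs. Either it is \textsc{Unsat}, which triggers the stop signal. Or it returns $\bot$, after which \Propose{} is empty and the loop exits without further queries. In both cases the total is at most $1+\sum_i(2^{\mathsf{bw}_i}-1)$, which is $O(m2^b)$ for uniform width $b$. Each query is the disjunction over $\mathcal{C}$, and $|\mathcal{C}|\le 2m$, which gives the bound on disjuncts.

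I expect the main obstacle to be this boundary bookkeeping around $\bot$ and overflowed candidates. The covering lemma must still hold when $l_i=u_i$, where the candidates collapse to $\bot$. The count must also charge the final step into $\bot$ against the single extra query rather than against an additional \textsc{Unsat} check. My first approach is to fix the convention that $\bot$ has rank $0$ and that $\Propose(\bot)=\emptyset$, and then to verify both endpoint cases explicitly.
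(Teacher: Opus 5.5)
Your proposal is correct and follows the paper's proof. You verify Conditions~1--3 of Theorem~\ref{thm:framework-correct} with the same covering argument (every $B\sqsubset A$ lies below the candidate that tightens a bound on which $B$ differs from $A$), and you bound the queries with the same potential argument, since your rank $\sum_i(u_i-l_i+1)$ is just the paper's $\mu(A)=\sum_i(A.u_i-A.l_i)$ shifted by $m$. Your extra steps are checking $\mathsf{Inv}(\top)$ and existence of the BII via closure under meets, and adopting the convention that \textsc{Propose} returns $\emptyset$ at $\bot$. These only make explicit the degenerate case, reachable only when $\mathit{Pre}$ is unsatisfiable, which the paper passes over by declaring Condition~2 vacuous and implicitly assuming $\mu\ge 0$.
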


\begin{proof}
\revise{A satisfiable refinement query returns an inductive $A'\sqsubset A$, while an unsatisfiable query leaves $A$ unchanged; hence Condition~1 of Theorem~\ref{thm:framework-correct} holds. Every strict descendant $B\sqsubset A$ lies below an immediate candidate obtained by tightening one bound on which $B$ differs from $A$; therefore, if the disjunctive query is unsatisfiable, no strictly tighter inductive element exists, establishing Condition~3. Algorithm~\ref{alg:linear-search} never terminates through an empty candidate set, so Condition~2 is vacuous, and correctness follows from Theorem~\ref{thm:framework-correct}. For the query bound, let $\mu(A)=\sum_i(A.u_i-A.l_i)$. Every successful query decreases $\mu$ by at least one, while $\mu(\top)=\sum_i(2^{\mathsf{bw}_i}-1)$. Thus there are at most $\sum_i(2^{\mathsf{bw}_i}-1)$ successful queries, followed by at most one unsuccessful query.}
\end{proof}


\begin{example}
\label{exmp:linear}
Table~\ref{table:linear-trace} and Figure~\ref{fig:linear-example} demonstrate the execution of the Linear Search strategy (Algorithm~\ref{alg:linear-search}) for a 3-bit variable $x$.

\begin{table}[t]
\centering
\caption{Trace of the linear search strategy (Alg.~\ref{alg:linear-search}) on variable $x$.}
\label{table:linear-trace}
%
\begin{tabular}{|c|c|l|l|l|}
\hline
\multicolumn{2}{|l|}{$P(A, X, X')$} & \multicolumn{3}{l|}{$(x=5 \to A(X)) \land (A(X) \land x<6 \land x'=x+1 \to A(X'))$} \\ \hline
\multicolumn{2}{|l|}{Initialization} & \multicolumn{3}{l|}{$A^\star = [0, 7]$ (3-bit unsigned)} \\ \hline
\textbf{Iter} & \textbf{Current $A^\star$} & \textbf{Candidate Set $\mathcal{C}$} & \textbf{Result} & \textbf{Update Action} \\ \hline

1 & $[0, 7]$ 
  & $\{[1, 7], [0, 6]\}$ 
  & \textsc{Sat} ($A'=[0, 6]$) 
  & $A^\star \gets [0, 6]$ \\ \hline

2 & $[0, 6]$ 
  & $\{[1, 6], [0, 5]\}$ 
  & \textsc{Sat} ($A'=[3, 6]$) 
  & $A^\star \gets [3, 6]$ \\ \hline

3 & $[3, 6]$ 
  & $\{[4, 6], [3, 5]\}$ 
  & \textsc{Sat} ($A'=[5, 6]$) 
  & $A^\star \gets [5, 6]$ \\ \hline

4 & $[5, 6]$ 
  & $\{[6, 6], [5, 5]\}$ 
  & \textsc{Unsat} 
  & Terminate \\ \hline

\multicolumn{2}{|l|}{Result} & \multicolumn{3}{l|}{The best interval invariant is $[5, 6]$.} \\ \hline
\end{tabular}%
\end{table}

\begin{figure*}[t]
	\centering 
	\includegraphics[width=\linewidth]{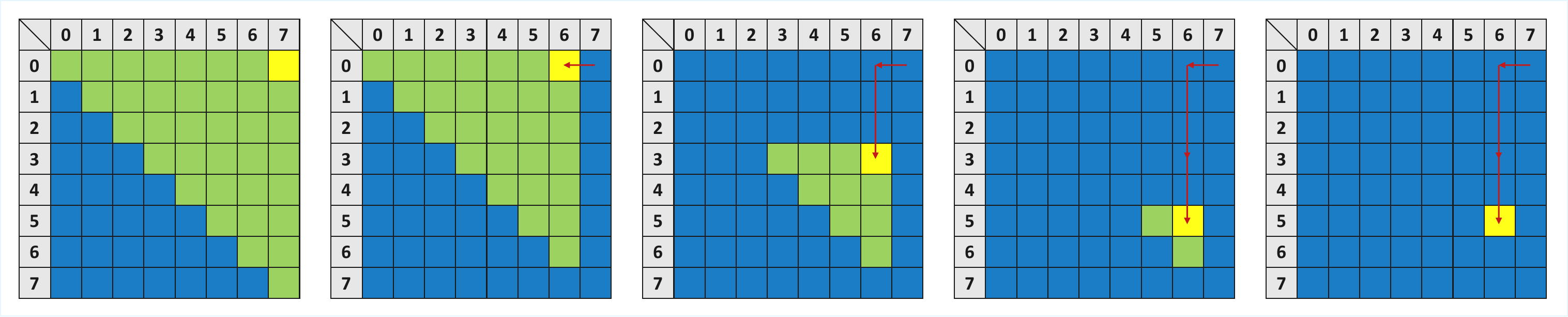}
	\caption{Illustration of the refining steps in Table~\ref{table:linear-trace}. Blue indicates unavailable lattices, yellow indicates the current solution, and green indicates the lattices tested in each step.}
    \label{fig:linear-example}
\end{figure*}

\end{example}

While this linear approach guarantees finding the BII by exhaustively proposing all possible solutions, its step-wise progression results in slow convergence.

%% file: 5.design.tex
\section{Bitwise Principled Refinement for BII Synthesis}
\label{sec:design}

The propose-and-refine framework boils down to a directed search over a lattice-structured abstract domain. However, the naive linear descent is inefficient: each candidate requires an expensive legality check, and fine-grained refinement converges slowly in large spaces.
We introduce a refinement strategy \revise{whose solver-query count is linear in the total bit-width of the template functions}. Inspired by \emph{binary lifting}~\cite{bender2004level}, we replace unit-step descent with exponential-step refinement, enabling efficient navigation of the search space (\cref{subsec:lifting}). Furthermore, we incorporate under-approximations derived to prune infeasible regions early and accelerate convergence (\cref{subsec:under-approx}).

\subsection{Proposing Strategy via Binary Lifting}
\label{subsec:lifting}
\noindent \textbf{Binary Lifting}.
Binary lifting is a classical algorithmic technique for efficiently answering queries over ordered structures, such as computing ancestors in trees. It exploits the binary representation of integers to reduce the number of search steps from linear to logarithmic. The method consists of a precomputation phase, in which jump pointers are constructed for powers of two, and a query phase, in which these pointers are used to incrementally refine a candidate solution. During a query, the algorithm iteratively considers jumps of size~$2^k$, starting from the largest relevant~$k$ and proceeding down to zero, updating the candidate when doing so yields a valid improvement.


\begin{example}
Consider finding the largest integer $x$ such that $x < 6$, starting with $x=1$.

\begin{enumerate}
    \item \emph{Phase 1: Precomputation}. We test increments starting from $0$ to find the largest exponent.
    \begin{itemize}
        \item Test $x + 2^0$. The result $2$ is valid, so $x \gets 2$.
        \item Test $x + 2^1$. The result $4$ is valid, so $x \gets 4$.
        \item Test $x + 2^2$. The result $8$ is invalid, so the precomputation phase ends here.
    \end{itemize}

    \item \emph{Phase 2: Query}. We test increments from the last valid exponent ($1$) down to $0$.
    \begin{itemize}
        \item Test $x + 2^1$. The result $6$ is invalid, so keep $x=4$.
        \item Test $x + 2^0$. The result $5$ is valid, so $x \gets 5$.
    \end{itemize}
    \item \emph{Result}. The procedure terminates with $x=5$.
\end{enumerate}
\end{example}

\smallskip
\noindent \textbf{Bitwise Greedy Strategy}.
Adapting binary lifting to the BII synthesis formulation produces a \textit{bitwise greedy} strategy: deciding the binary representation of the final result bit-by-bit, and unlike the standard binary lifting, it can skip some bit positions with the help of the refinement procedure.

This strategy operates on the principle that determining high-order bits first prunes the search space most aggressively. For a lower bound $l_i$ (initialized to $0$):


\begin{itemize}
    \item \textbf{Propose($A$):} At bit position $k$, we hypothesize that the optimal lower bound is at least large enough to have the $k$-th bit set. We propose a candidate $C$ where the $k$-th bit is forced to $1$. If the $k$-th bit is already set (due to a previous update), we skip to the next lower bit.
    \item \textbf{Refine($A, \mathcal{C}$):} We query the solver to check if a valid \revise{inductive} invariant exists within the candidate set $\mathcal{C}$.
    \begin{itemize}
        \item If \textsc{Sat}, the solver returns a witness model $A'$. We return $(A', \textit{false})$. This performs a \emph{greedy update}: we immediately adopt the tighter witness $A'$ as the new current invariant $A$. This often resolves multiple lower-order bits in a single step. The \textit{false} signal ensures the search continues until all remaining unresolved bits are resolved.
        \item If \textsc{Unsat}, the proposed bit configuration is invalid. We return $(A, \textit{false})$. This implies the $k$-th bit must remain at its current value (e.g., $0$ for a lower bound). Crucially, the stop signal remains \textit{false}, as the failure to set the $k$-th bit does not imply the search is complete; the algorithm must proceed to test the next bit ($k-1$).
    \end{itemize}
\end{itemize}

A symmetric logic applies to the upper bound $u_i$, where we attempt to force bits to $0$ (tightening the ceiling) from the highest bit position to the lowest bit position.
\revise{\delete{This strategy guarantees that the number of proposed candidates is at most $2*\sum_i{\mathsf{bw}_i}$, and that the number is halved at minimum during the validity check of each candidate.} Thus, every nonempty solver query resolves at least one previously unresolved bound bit.}

\revise{\delete{A candidate direction may occur in more than one query, so it is safer to count resolved bit positions and solver queries rather than generated candidate occurrences.}}

\begin{lemma}[\revise{\delete{Prefix preservation for bitwise refinement} Bit preservation and progress}]
\label{lem:bit-progress}
\revise{Assume that the BII $A^\star$ exists and that every solver query is exact. At every iteration of Algorithm~\ref{alg:bitwise-greedy}, $A^\star\sqsubseteq A$, and every lower- or upper-bound bit position already passed by \Propose agrees with the corresponding bit of $A^\star$. Every nonempty solver query passes at least one additional bit position.}
\end{lemma}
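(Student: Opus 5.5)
We argue by induction on the iterations of Algorithm~\ref{alg:bitwise-greedy} and maintain a joint invariant $\mathcal{J}(A)$ with two parts. First, $A^\star\sqsubseteq A$, i.e.\ $A.l_i\le l_i^\star$ and $u_i^\star\le A.u_i$ for every $i$. Second, for each dimension $i$, if Propose has already passed bit positions $\mathsf{bw}_i-1,\dots,k$ of the lower bound, then the bits of $A.l_i$ at those positions coincide with the bits of $l_i^\star$, and symmetrically for $u_i$. At initialization $A=\top$, and $\top$ is inductive by the assumption of Theorem~\ref{thm:framework-correct}. So $A^\star\sqsubseteq\top$, and no bit position has been passed yet, which gives $\mathcal{J}(\top)$.

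\textbf{Soundness of a SAT step.} A satisfiable query returns an inductive $A'$ with $A'\sqsubseteq C\sqsubseteq A$. By optimality in Definition~\ref{def:bii}, $A^\star\sqsubseteq A'$, so the first part of $\mathcal{J}$ is preserved. For the bit part, we use a prefix argument on unsigned numbers. We have $A.l_i\le A'.l_i\le l_i^\star$, and $A.l_i$ and $l_i^\star$ agree on the high prefix of already passed bits. Any number squeezed between them must share that prefix, so $A'.l_i$ agrees with $l_i^\star$ on it as well. The upper bounds are handled the same way with the order reversed.

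\textbf{The newly passed bit and the SKIP case.} This is the main obstacle, because we must justify the bit value recorded when Propose moves past position $k$. The candidate $C$ for the lower bound keeps the agreed prefix and forces bit $k$ to $1$; concretely, $C.l_i$ is the prefix, then $1$ at position $k$, then $0$s below.

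If the query is \textsc{Unsat}, then no inductive element lies below $C$. Since $A^\star$ is inductive, $A^\star\not\sqsubseteq C$. The other components of $C$ equal those of $A$, which dominate $A^\star$, so the failure must come from $l_i^\star<C.l_i$. Because $l_i^\star$ shares the prefix, this inequality forces bit $k$ of $l_i^\star$ to be $0$. Bit $k$ of $A.l_i$ is also $0$: otherwise Propose would have skipped position $k$, and $A.l_i$ cannot exceed $l_i^\star$ on that prefix.

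If the query is \textsc{Sat}, then $A'.l_i\ge C.l_i$, so bit $k$ of $A'.l_i$ is $1$. The squeeze $A'.l_i\le l_i^\star$ together with the shared prefix then forces bit $k$ of $l_i^\star$ to be $1$.

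If Propose skips $k$ because $A.l_i$ already has bit $k$ set, the squeeze $A.l_i\le l_i^\star$ again forces bit $k$ of $l_i^\star$ to be $1$. The upper bound is dual, with the roles of $0$ and $1$ swapped. A subtle point is that a SAT witness may alter lower-order bits of several dimensions at once. This is harmless, because the argument above only ever uses the passed prefix of each dimension.

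\textbf{Progress.} Every nonempty query contains a candidate whose associated position $k$ was not yet passed, and after the query that position is marked as passed whatever the outcome. Hence each nonempty query passes at least one additional bit position. The count of positions is bounded by $2\sum_i\mathsf{bw}_i$, which yields the linear bound on solver queries stated before the lemma.
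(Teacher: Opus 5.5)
Your proof is correct and takes essentially the same route as the paper's. It starts the induction at $A=\top$, uses the squeeze $A.l_i\le A'.l_i\le l_i^\star$ (and its dual for upper bounds) to show that \textsc{Sat} witnesses keep the passed prefix and that tested or skipped $1$-bits agree with $A^\star$, uses $A^\star\not\sqsubseteq C$ on \textsc{Unsat} to show the tested bit keeps its current value, and gets progress from the satisfied disjunct or the pointer decrement. The one thing to tighten is that after a \textsc{Sat} query only the position of a disjunct actually satisfied by $A'$ is guaranteed to be passed, since $A'.l_i\ge C.l_i$ holds only for that $C$; so ``that position is marked as passed whatever the outcome'' must be read existentially, which is how the paper states it.
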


\begin{proof}
\revise{The claim holds initially because $A=\top$ and no position has been passed.}

\revise{If a query is \textsc{Sat}, it returns an inductive witness $A'\sqsubseteq A$. Since $A^\star$ is the BII, $A^\star\sqsubseteq A'$. Under the already fixed higher-order prefix, every tested or skipped lower-bound bit set to $1$ in $A'$ must also be $1$ in $A^\star$; otherwise $A^\star.l_i<A'.l_i$. The upper-bound case is symmetric. At least one candidate disjunct is satisfied, so at least one new position is passed.}

\revise{If a query is \textsc{Unsat}, no valid inductive element lies below any tested candidate. In particular, $A^\star$ lies below none of them, so each tested bit retains its current value. Since $A$ is unchanged, the next call to \Propose advances the corresponding position pointers.}

\revise{Thus the invariant is preserved, and every nonempty query passes at least one new position. This reasoning is inherently global: a \textsc{Sat} witness may simultaneously tighten multiple rows, and we never assume that feasibility factorizes across rows.}
\end{proof}

\begin{algorithm}[t]
\caption{Bitwise greedy instantiation}
\label{alg:bitwise-greedy}
\KwIn{Logic encoding $P(A, X, X')$}
\KwOut{Best Inductive Invariant $A^{\star}$}

\SetKwFunction{Propose}{Propose}
\SetKwFunction{Refine}{Refine}
\SetKwProg{Fn}{Function}{:}{}

\revise{
\textbf{global} $A_{last}\gets\bot$\;
\tcp*[l]{Shared with Algorithm~\ref{alg:advanced-refine}}
}

\Fn{\Propose{$A$}}{
    \textbf{static}
    $lpos \gets [\mathsf{bw}_1-1, \dots, \mathsf{bw}_m-1]$,
    $upos \gets [\mathsf{bw}_1-1, \dots, \mathsf{bw}_m-1]$
    \revise{\delete{, $A_{last} \gets \bot$}}\;

    $\mathcal{C} \gets \emptyset$\;
    \ForEach{dimension $i \in \{1, \dots, m\}$}{
        
        \If{$A = A_{last}$}{
            $lpos_i \gets lpos_i - 1$, $upos_i \gets upos_i - 1$\tcp*[l]{Move all position pointers if not refined}
        }

        $lpos_i \gets \max \{k \le lpos_i \mid (A.l_i)_k = 0\}$\tcp*[l]{Gets the highest unfixed unset(0) bit}
        
        \If{$lpos_i \ge 0$}{
            \tcp{Set the bit and clear lower bits}
            $C_{l} \gets A[l_i \leftarrow (A.l_i \ \& \ \sim ((1 \ll lpos_i) - 1)) \mid (1 \ll lpos_i)]$\;
            $\mathcal{C} \gets \mathcal{C} \cup \{C_{l}\}$\;
        }

        $upos_i \gets \max \{k \le upos_i \mid (A.u_i)_k = 1\}$\tcp*[l]{Gets the highest unfixed set(1) bit}
        \If{$upos_i \ge 0$}{
            \tcp{Clear the bit and set lower bits}
            $C_{u} \gets A[u_i \leftarrow (A.u_i \mid ((1 \ll upos_i) - 1)) \ \& \ \sim(1 \ll upos_i)]$\;
            $\mathcal{C} \gets \mathcal{C} \cup \{C_{u}\}$\;
        }
    }
    $A_{last} \gets A$\;
    \Return $\mathcal{C}$\;
}

\Fn{\Refine{$A, \mathcal{C}$}}{
    \If{$\exists A' . (\bigvee_{C \in \mathcal{C}} A' \sqsubseteq C) \land \forall X, X' . P(A', X, X')$}{
        \Return $(A', \textbf{false})$\;
    }
    \Return $(A, \textbf{false})$\;
}

\end{algorithm}

\begin{theorem}[\revise{Correctness and complexity of Algorithm~\ref{alg:bitwise-greedy}}]
\label{thm:bitwise-correct}
\revise{Let $W=\sum_{i=1}^{m}\mathsf{bw}_i$. Assuming that the BII exists and that every solver query is exact, Algorithm~\ref{alg:bitwise-greedy} returns the BII and \delete{generates at most $2W$ candidate bit directions, and} uses at most $2W$ solver queries. Each query contains at most $2m$ candidate disjuncts. For a uniform bit-width $b$, the query count is $O(mb)$.}

\end{theorem}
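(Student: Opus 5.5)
The proof combines Lemma~\ref{lem:bit-progress}, which supplies both the invariant and the progress measure, with the generic correctness result Theorem~\ref{thm:framework-correct}. One mismatch has to be handled first. Algorithm~\ref{alg:bitwise-greedy}'s \Refine never returns $\mathit{stop}=\mathbf{true}$, so the loop of Algorithm~\ref{alg:propose-refine} can only end when \Propose returns $\emptyset$. We therefore have to verify Conditions~1 and~2 of Theorem~\ref{thm:framework-correct}; Condition~3 is vacuous. Condition~1 is immediate. On \textsc{Sat} the returned $A'$ satisfies $\mathsf{Inv}(A')$ and $A'\sqsubseteq C\sqsubset A$ for some candidate $C$. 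On \textsc{Unsat}, $A$ is returned unchanged and was inductive by induction from $\mathsf{Inv}(\top)$.

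The query bound comes from the progress half of Lemma~\ref{lem:bit-progress}. There are $2m$ position pointers, namely $lpos_i$ and $upos_i$ for $1\le i\le m$. Each pointer ranges over $\mathsf{bw}_i$ positions and only moves downward, because the \texttt{static} pointers are decremented or reset by a $\max$ over $k\le lpos_i$. So the total number of bit positions that can ever be passed is $\sum_i 2\,\mathsf{bw}_i=2W$. By the lemma, every nonempty solver query passes at least one new position. Hence at most $2W$ nonempty queries occur, after which every pointer is $<0$ and \Propose returns $\emptyset$, so the algorithm terminates. Each query is a disjunction over $\mathcal C$, which contains at most one $C_l$ and one $C_u$ per dimension, giving at most $2m$ disjuncts. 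For uniform width $b$ we have $W=mb$, so the count is $O(mb)$.

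For correctness at termination, instead of proving Condition~2 abstractly, I would argue directly from the invariant half of Lemma~\ref{lem:bit-progress}. When \Propose returns $\emptyset$, all pointers have run below $0$, so every bit position of every $l_i$ and $u_i$ has been passed. The lemma says each passed bit of $A$ agrees with the corresponding bit of $A^\star$. Thus $A.l_i=A^\star.l_i$ and $A.u_i=A^\star.u_i$ for all $i$, that is, $A=A^\star$. This also yields Condition~2 if one wants to route the argument through Theorem~\ref{thm:framework-correct}: the BII $A^\star=A$ admits no inductive $B\sqsubset A$, since such a $B$ would satisfy $A^\star\sqsubseteq B\sqsubset A^\star$.

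The main obstacle is making the notion of ``passed position'' precise enough that ``all pointers below $0$'' really covers every bit. The pointer logic skips bits that are already set (or already cleared for $u_i$) via the $\max$. It also decrements all pointers when $A=A_{last}$, including pointers of rows that were not part of the failing query only because they were already exhausted. I would define the set of passed positions of $l_i$ as $\{k : k> lpos_i\}$, and similarly for $u_i$. I would then check that every skipped bit agrees with $A^\star$. For $l_i$, a skipped bit is a $1$ in $A.l_i$ under a prefix that already agrees with $A^\star$, and $A^\star\sqsubseteq A$ gives $A^\star.l_i\ge A.l_i$, so $A^\star$ has a $1$ there too. This is exactly what the lemma asserts, so I would rely on it and only make sure the counting uses pointer positions rather than candidate occurrences.
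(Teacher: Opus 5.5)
Your proposal is correct and follows essentially the same route as the paper. Condition~1 of Theorem~\ref{thm:framework-correct} is immediate, Condition~3 is vacuous because \textsc{Refine} never signals stop, and Lemma~\ref{lem:bit-progress} supplies both the $2W$ query bound and, once \textsc{Propose} returns $\emptyset$, the bitwise agreement $A=A^\star$ that gives Condition~2; your extra bookkeeping (passed positions as $\{k : k > lpos_i\}$, skipped bits agreeing with $A^\star$ via $A^\star.l_i \ge A.l_i$ under a shared prefix) only makes explicit what the paper's proof leaves implicit.
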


\begin{proof}
\revise{Every refinement result is inductive and no greater than the current element, so Condition~1 of Theorem~\ref{thm:framework-correct} holds. Algorithm~\ref{alg:bitwise-greedy} never returns $\mathit{stop}=\mathit{true}$, so Condition~3 is vacuous. By Lemma~\ref{lem:bit-progress}, every nonempty query passes at least one previously unresolved lower- or upper-bound bit. There are $2W$ such \delete{candidate directions} bit positions. When $\Propose(A)=\emptyset$, all positions have been passed, and every bound bit agrees with $A^\star$; hence $A=A^\star$, establishing Condition~2. Correctness follows from Theorem~\ref{thm:framework-correct}, and the same progress argument gives the $2W$ query bound. \delete{When all positions have been resolved, Lemma~\ref{lem:bit-progress} leaves no strictly tighter inductive element, so the current invariant is the BII by \cref{def:bii}.}}
\end{proof}

\begin{example}
Table~\ref{table:bitwise-trace} and Figure~\ref{fig:bitwise-example} demonstrate the bitwise greedy strategy (Algorithm~\ref{alg:bitwise-greedy}) on the same 3-bit variable $x$. The example shows that with a greedy update (at Bit 2), the algorithm can skip generating a candidate in the subsequent steps (at Bit 0).

\begin{table}[t]
\centering
\caption{Trace of the bitwise greedy strategy (Alg.~\ref{alg:bitwise-greedy}) on variable $x$.}
\label{table:bitwise-trace}
\resizebox{0.95\columnwidth}{!}
{
\begin{tabular}{|c|c|l|l|l|}
\hline
\multicolumn{2}{|l|}{$P(A, X, X')$} & \multicolumn{3}{l|}{$(x=5 \to A(X)) \land (A(X) \land x<6 \land x'=x+1 \to A(X'))$} \\ \hline
\multicolumn{2}{|l|}{Initialization} & \multicolumn{3}{l|}{$A^\star = [000_{(2)}, 111_{(2)}] ($[0, 7]$)$ (3-bit unsigned)} \\ \hline
\textbf{Iteration} & \textbf{Current $A^\star$} & \textbf{Candidate Set $\mathcal{C}$} & \textbf{Result} & \textbf{Update Action} \\ \hline

\multirow{2}{*}{1} & \multirow{2}{*}{$[000_{(2)}, 111_{(2)}]$} 
  & $C_l: [\textcolor{red}{1}00_{(2)}, 111_{(2)}]$ & \textsc{Sat} & \multirow{2}{*}{$A^\star \gets [101_{(2)}, 111_{(2)}]$} \\
  & & $C_u: [000_{(2)}, \textcolor{red}{0}11_{(2)}]$ & $A'=[101_{(2)}, 111_{(2)}]$ & \\ \hline

\multirow{2}{*}{2} & \multirow{2}{*}{$[101_{(2)}, 111_{(2)}]$} 
  & $C_l: [1\textcolor{red}{1}0_{(2)}, 111_{(2)}]$ & \multirow{2}{*}{\textsc{Unsat}} & \multirow{2}{*}{None} \\
  & & $C_u: [101_{(2)}, 1\textcolor{red}{0}1_{(2)}]$ & & \\ \hline

\multirow{2}{*}{3} & \multirow{2}{*}{$[101_{(2)}, 111_{(2)}]$} 
  & $C_l:$ \textit{skipped(bit is 1)} & \textsc{Sat} & \multirow{2}{*}{$A^\star \gets [101_{(2)}, 110_{(2)}]$} \\
  & & $C_u: [101_{(2)}, 11\textcolor{red}{0}_{(2)}]$ & $A'=[101_{(2)}, 110_{(2)}]$ & \\ \hline

\multicolumn{2}{|l|}{Result} & \multicolumn{3}{l|}{The best interval invariant is $[101_{(2)}, 110_{(2)}]$ ($[5, 6]$).} \\ \hline
\end{tabular}%
}
\end{table}

\begin{figure*}[t]
	\centering 
	\includegraphics[width=\linewidth]{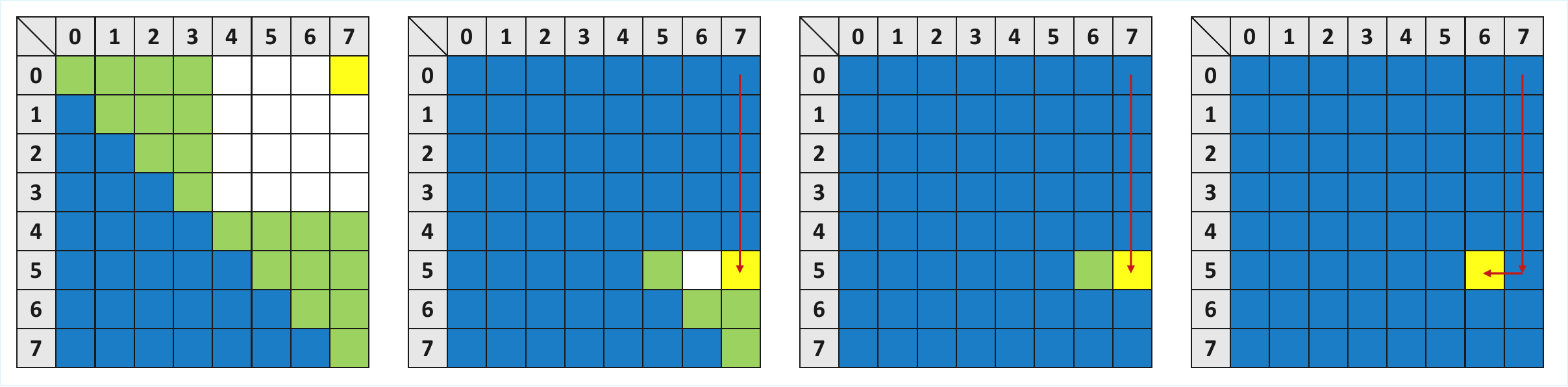}
	\caption{Illustration of the refining steps in Table~\ref{table:bitwise-trace}. Blue indicates unavailable lattices, yellow indicates the current solution, green indicates the lattices tested in each step, and white indicates the possible lattices temporarily not tested.}
    \label{fig:bitwise-example}
\end{figure*}

\end{example}

\subsection{Refining Strategy with the Under-Approximation}
\label{subsec:under-approx}
With the clear structural reduction from abstract elements to bounds, we are further able to detect an under-approximation from the $Refine(A,\mathcal{C})$ process. The algorithms above exploit only the \textsc{Sat} results. However, the \textsc{Unsat} results can also provide valuable information. By systematically tracking \textsc{Unsat} results, we can construct an under-approximation of the BII.

\smallskip
\noindent \textbf{Boundary Limits}.
To present the under-approximation, we introduce auxiliary variables for each dimension $i$, denoted as $lb_i$ and $ub_i$ for each component $z_i$, which we call the \textit{boundary limits}. Unlike the current bounds $l_i$ and $u_i$, which track the best valid invariant found so far, these auxiliary variables track the limits of the search space derived from failed checks. Specifically:
\begin{itemize}
    \item $lb_i$ represents the \revise{\delete{theoretical}} maximum possible value for the optimal lower bound $l^\star_i$.
    \item $ub_i$ represents the \revise{\delete{theoretical}} minimum possible value for the optimal upper bound $u^\star_i$.
\end{itemize}

Together with the current abstract element $A$, these variables constrain the optimal bounds within the ranges:
\[
A.l_i \le l^\star_i \le lb_i \quad \text{and} \quad ub_i \le u^\star_i \le A.u_i.
\]

We update these limits in the $\Refine(A, \mathcal{C})$ step:
\begin{itemize}
    \item \textsc{Unsat} (Pruning): If a proposal to tighten a lower bound to $C.l_i$ fails, it implies no valid invariant exists with a lower bound as high as $C.l_i$. Thus, the optimal lower bound $l^\star_i$ must be strictly less than $C.l_i$. We update the limit: $lb_i \leftarrow \min(lb_i, C.l_i - 1)$. 
    Symmetrically, if a proposal for an upper bound $c.u_i$ fails, we update $ub_i \leftarrow \max(ub_i, c.u_i + 1)$.
    
    \item \textsc{Sat} (Tightening): If a proposal succeeds, the solver returns a witness model $A'$. This model is a valid \revise{inductive} invariant, so we update the current best bounds $lb_i$ and $ub_i$ with respect to their counterparts $A'.u_i$ and $A'.l_i$ (i.e. $lb_i \leftarrow \min(lb_i,A'.u_i)$, $ub_i \leftarrow \max(ub_i,A'.l_i)$).

\end{itemize}


These boundary limits define the effective termination points of the search. Any candidate falling outside the active regions \([A.l_i, lb_i]\) or \([ub_i, A.u_i]\) can be immediately pruned. Moreover, these bounds can serve as auxiliary variables to constrain the search space in an SMT check. \revise{\delete{The interval \([lb_i, ub_i]\) forms a conservative approximation of the optimal bounds \([l_i, u_i]\)} The intervals \([lb_i,ub_i]\) record the search limits inferred from failed checks}, and the conjunction \(\bigwedge_i [lb_i, ub_i]\) yields an under-approximation of the target invariant \(A^\star\). We refer to this under-approximation as \(B\) in the remainder of the paper. \revise{These updates maintain $B\sqsubseteq A^\star$, as established in Theorem~\ref{thm:bounded-correct}.}

\smallskip
\noindent \textbf{Bounded Leap}.
As the search progresses, the gap between the current invariant $A$ and the under-approximation defined by the boundary limit narrows. When the search space becomes sufficiently constrained, a proposing strategy can be less efficient than a direct solve.

We introduce the \textit{bounded leap} strategy to exploit this regime. Whenever the under-approximation $B\neq\bot$ (i.e., $\forall i.\ lb_i \le ub_i$), we issue a single SMT query to refine the current invariant by finding an invariant within the constrained region:
\[
\exists A' . \forall X, X' . P(A', X, X') \land (B \sqsubseteq A' \sqsubset A).
\]

If the query is \textsc{Sat}, the result yields a strictly tighter invariant $A'$. If the query is \textsc{Unsat}, we conclude that $A$ is the best inductive invariant within the bounded region, enabling early termination.
\revise{After a successful bounded leap, we set $A_{last}\gets A'$ so that the next proposal advances past the directions rejected by the preceding standard query.}
The enhanced refinement procedure is presented in Algorithm~\ref{alg:advanced-refine}.

\begin{algorithm}[t]
\caption{Refinement with under-approximation}
\label{alg:advanced-refine}
\KwIn{Logic encoding $P(A, X, X')$}
\KwOut{Best Inductive Invariant $A^{\star}$}

\SetKwFunction{Propose}{Propose}
\SetKwFunction{Refine}{Refine}
\SetKwProg{Fn}{Function}{:}{}

\Fn{\Refine{$A, \mathcal{C}$}}{
    \textbf{static} $lb \gets [2^{\mathsf{bw}_i}-1, \dots]$, $ub \gets [0, \dots]$\;

    \tcp{1. Standard Check: Verify candidates}
    \If{$\exists A' . (\bigvee_{C \in \mathcal{C}} A' \sqsubseteq C) \land (\bigwedge_i \revise{\delete{l_i}A'.l_i} \leq lb_i) \land (\bigwedge_i ub_i \leq \revise{\delete{u_i}A'.u_i}) \land \forall X, X' . P(A', X, X')$}{
        \tcp{SAT: Witness found. Update limits using model $A'$.}
        \ForEach{dimension $i$}{
            
            $lb_i \gets \min(lb_i, A'.u_i)$, $ub_i \gets \max(ub_i, A'.l_i)$\;
        }
        \Return $(A',\textbf{false})$\;
    }

    \tcp{2. UNSAT: Prune limits based on failed candidates.}
    \ForEach{$C \in \mathcal{C}$}{
        \ForEach{dimension $i$}{
            \If{$C.l_i > A.l_i$}{
                $lb_i \gets \min(lb_i, C.l_i - 1)$\tcp*[l]{Attempt to tighten $l_i$ failed $\implies l^\star_i < C.l_i$}
            }
            \If{$C.u_i < A.u_i$}{
                $ub_i \gets \max(ub_i, C.u_i + 1)$\tcp*[l]{Attempt to tighten $u_i$ failed $\implies u^\star_i > C.u_i$}
            }
        }
    }

    \tcp{3. Bounded Leap: Try solving directly in the constrained region.}
    $B \gets \bigwedge_i [lb_i, ub_i]$\;
    \If{$B \neq \bot$}{
        \If{$\exists A' . B \sqsubseteq A' \sqsubset A \land \forall X, X' . P(A', X, X')$}{
            \revise{
            \ForEach{dimension $i$}{
                $lb_i \gets \min(lb_i,A'.u_i)$,
                $ub_i \gets \max(ub_i,A'.l_i)$\;
            }
            }

            \revise{
            $A_{last}\gets A'$\tcp*[l]{Consume the failed standard directions next}
            }
            \Return $(A',\textbf{false})$\;
        }
        \Else{
            \Return $(A, \textbf{true})$\tcp*[l]{No tighter invariant exists}
        }
    }

    \Return $(A, \textbf{false})$\;
}

\end{algorithm}
\normalsize

\begin{theorem}[\revise{Correctness and complexity with under-approximation}]
\label{thm:bounded-correct}
\revise{Let $W=\sum_{i=1}^{m}\mathsf{bw}_i$. Assume that the BII $A^\star$ exists, that $\mathsf{Inv}(\top)$ holds, and that all solver queries are exact. Algorithm~\ref{alg:bitwise-greedy}, using the refinement procedure in Algorithm~\ref{alg:advanced-refine}, returns $A^\star$ and makes at most $4W$ solver queries.}
\end{theorem}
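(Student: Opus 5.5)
The plan is to reduce the theorem to Theorem~\ref{thm:framework-correct} and the bit-progress argument of Lemma~\ref{lem:bit-progress}. Along the way we maintain one additional invariant for the boundary limits: at every point of the execution, $A.l_i \le l^\star_i \le lb_i$ and $ub_i \le u^\star_i \le A.u_i$ for all $i$, i.e.\ $B \sqsubseteq A^\star \sqsubseteq A$ whenever $B\neq\bot$. Initially $lb_i = 2^{\mathsf{bw}_i}-1$ and $ub_i=0$, so the invariant holds trivially.

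\textbf{Step 1: preservation of the invariant.} The argument follows the cases of Algorithm~\ref{alg:advanced-refine}.
\begin{enumerate}
\item \emph{SAT branch.} The witness $A'$ is inductive, so $A^\star \sqsubseteq A'$. This gives $l^\star_i \le A'.l_i$ and $u^\star_i \ge A'.u_i$. The update $lb_i \gets \min(lb_i, A'.u_i)$ is sound because $l^\star_i \le u^\star_i$ and $u^\star_i \le A'.u_i$ give $l^\star_i \le A'.u_i$, assuming $A^\star\neq\bot$. The symmetric update for $ub_i$ is handled the same way.
\item \emph{UNSAT branch.} The standard query ranged over the region intersected with the limit constraints $A'.l_i\le lb_i$ and $ub_i\le A'.u_i$. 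Since $A^\star$ satisfies these constraints by the invariant, and $A^\star$ is inductive, $A^\star$ lies below no candidate $C$. The key point is that each candidate produced by Algorithm~\ref{alg:bitwise-greedy} tightens a single bound. So "$A^\star \not\sqsubseteq C$" means exactly $l^\star_i < C.l_i$ (or $u^\star_i > C.u_i$), which justifies the pruning updates.
\item \emph{Bounded leap.} On success the update is the same as in the SAT case. On failure, no inductive $A'$ satisfies $B\sqsubseteq A'\sqsubset A$. But $A^\star$ lies in $[B,A]$, so $A^\star=A$, which establishes Condition~3 of Theorem~\ref{thm:framework-correct}.
\end{enumerate}
The case $A^\star=\bot$ (empty reachable set) needs a short separate remark: either the limits never cross, or any returned value already coincides with $\bot$.

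\textbf{Step 2: the framework conditions.} Condition~1 is immediate, since every returned $A'$ is inductive and below $A$. Condition~2 follows as in Theorem~\ref{thm:bitwise-correct}. I would check that Lemma~\ref{lem:bit-progress} still holds under the extra limit constraints. Its SAT argument uses only that the witness is inductive. Its UNSAT argument needs "$A^\star$ lies below no tested candidate", which Step 1 provides despite the restricted query. After a successful leap, setting $A_{last}\gets A'$ makes \Propose advance the pointers past the directions just refuted. I must verify that this advance is sound: every refuted bit keeps its value relative to $A^\star$, using the same prefix argument applied to the unchanged higher-order bits of $A'$.

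\textbf{Step 3: query count.} Each call to \Refine issues at most two solver queries: the standard check and possibly one bounded leap. The argument then charges the number of \Refine calls to passed bit positions. Every call with a nonempty candidate set passes at least one of the $2W$ positions: in the standard-SAT case by Lemma~\ref{lem:bit-progress}, and in the UNSAT case because the pointers advance on the next \Propose, both after a failed leap-free call and after a leap (thanks to $A_{last}\gets A'$). This gives at most $2W$ calls and hence at most $4W$ queries.

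I expect the main obstacle to be this interaction between the bounded leap and the pointer mechanism. The leap's witness $A'$ may change bits at positions not yet passed, and possibly bits at passed positions as well. The proof must show that all passed bits still agree with $A^\star$ after a leap. My first attempt would use $A^\star \sqsubseteq A' \sqsubseteq A$ together with the agreement of the higher-order prefixes of $A$ and $A^\star$. From these it follows that $A'$ shares that prefix on every passed position, so the agreement invariant of Lemma~\ref{lem:bit-progress} survives.
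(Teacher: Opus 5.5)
Your main argument is the paper's proof. You keep $B\sqsubseteq A^\star$, which you correctly read componentwise as $l^\star_i\le lb_i$ and $ub_i\le u^\star_i$. From this, the limit constraints never exclude $A^\star$, so Lemma~\ref{lem:bit-progress} survives. An unsatisfiable leap gives Condition~3, and you charge at most two queries per \textsf{Refine} call to the $2W$ bit positions. Two of your observations fill in steps the paper leaves implicit: each candidate tightens a single bound (which the pruning updates need), and the sandwich $A.l_i\le A'.l_i\le l^\star_i$ fixes the passed bits of a leap witness. For $A^\star\neq\bot$ your proof is sound. One slip: $A^\star\sqsubseteq A'$ gives $A'.l_i\le l^\star_i$ and $u^\star_i\le A'.u_i$, not the reversed inequalities you wrote in Step~1; your next sentence already uses the correct direction.

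The claim that fails is your dismissal of $A^\star=\bot$. In that case the \textsc{Sat} update $ub_i\gets\max(ub_i,A'.l_i)$ really does break $ub_i\le u^\star_i$. The constraint $ub_i\le A'.u_i$ in the standard query then excludes every empty element. For a concrete instance, take one $1$-bit variable with $F=[x]$, $\mathit{Pre}=\mathit{false}$ and $x'=x$. Every element is inductive, so $A^\star=\bot$.
\begin{enumerate}
\item The first query may legitimately return $[1,1]$, which sets $lb=ub=1$.
\item The next candidate $[1,0]$ is rejected by the standard query.
\item The leap with $B=[1,1]$ is unsatisfiable, so $[1,1]\neq A^\star$ is returned. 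Plain Algorithm~\ref{alg:bitwise-greedy} would instead reach $[1,0]$.
\end{enumerate}
So the limits do cross and the output is not $\bot$, which refutes both halves of your remark. The theorem needs the extra hypothesis that $\mathit{Pre}$ is satisfiable, equivalently $A^\star\neq\bot$. The paper's proof also uses this tacitly when it asserts that the limit updates preserve $B\sqsubseteq A^\star$. With that hypothesis added, your proof goes through as written.
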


\begin{proof}
\revise{Initially, $B=\bot\sqsubseteq A^\star$. A successful standard or bounded-leap query returns an inductive $A'\sqsubset A$, while every other result leaves $A$ unchanged; hence Condition~1 of Theorem~\ref{thm:framework-correct} holds. The corresponding boundary-limit updates preserve $B\sqsubseteq A^\star$.}

\revise{If a standard query is \textsc{Unsat}, it excludes only candidate regions that contain no valid inductive element satisfying the boundary limits. Since $B\sqsubseteq A^\star$, these additional limits do not exclude $A^\star$. Consequently, the bit-preservation and progress argument of Lemma~\ref{lem:bit-progress} continues to hold. Thus, when $\Propose(A)=\emptyset$, Condition~2 holds.}

\revise{If a bounded-leap query is \textsc{Unsat} and $A^\star\sqsubset A$, then $A^\star$ itself would satisfy $B\sqsubseteq A^\star\sqsubset A$ and $\mathsf{Inv}(A^\star)$, contradicting unsatisfiability. Hence $A=A^\star$, establishing Condition~3. Correctness follows from Theorem~\ref{thm:framework-correct}.}

\revise{Lemma~\ref{lem:bit-progress} bounds the number of standard queries by $2W$. At most one bounded-leap query follows each failed \delete{bitwise candidate query} standard query, and the update of $A_{last}$ prevents the rejected directions from being retested. Therefore the total is at most $4W$.}
\end{proof}

\subsection{Summary}
\label{subsec:summary}
We have presented a propose-and-refine approach to BII synthesis over ordered abstract domains. The linear search instantiation serves as a simple and complete baseline, but its convergence is inherently incremental. The bitwise refinement strategy improves on this baseline by replacing unit-step descent with \revise{\delete{logarithmic} linear} refinement in the bit width, while boundary limits and bounded leap further exploit solver feedback and construct an under-approximation to prune infeasible regions and accelerate the remaining search.

The next section evaluates these design choices empirically and compares them with symbolic abstraction-based approaches for the same target domain~\cite{yao2021program,thakur2012bilateral,reps2004symbolic}.

%% file: 6.evaluation.tex
\section{Evaluation}
\label{sec:evaluation}

Here, we evaluate the presented algorithms by investigating the following research questions:
\begin{itemize}
    \item \textbf{RQ1}: How efficient are our algorithms compared to existing approaches (\cref{subsec:eval:performance})?
    \item \textbf{RQ2}: What factors influence the performance of the evaluated algorithms (\cref{subsec:eval:in-depth})?
    \item \textbf{RQ3}: What is the effect of the proposed strategies over the base implementation (\cref{subsec:eval:ablation})?
    \item \textbf{RQ4}: To what extent can BII improve the effectiveness of $k$-induction (\cref{subsec:eval:enhance})?
\end{itemize}

\noindent \textbf{Abstract Domains}.
Our approach applies to abstract domains expressible in the canonical form $L \leq F(X) \leq U$. \revise{\delete{We focus on interval and octagon abstractions as representative instances of this class, which offer a favorable balance between precision and performance.} Our evaluation instantiates three fixed template families: intervals, octagons, and sparse template polyhedra. For variables $x_1,\ldots,x_n$, the interval rows are $x_i$; the octagon rows additionally include $x_i+x_j$ and $x_i-x_j$ for $i<j$; and the sparse template-polyhedra rows additionally include canonical support-three forms $x_i+s_jx_j+s_kx_k$, where $i<j<k$ and $s_j,s_k\in\{-1,1\}$.}

\noindent \textbf{Benchmarks}.
Our evaluation uses benchmarks adapted from multiple established sources, including (1) LoopInvGen, which aggregates tasks from SyGuS-COMP, SV-COMP, and other verification literature (e.g., HOLA~\cite{dillig2013inductive})
and (2) the multi-phase benchmarks~\cite{riley2022multi}). 
To assess performance under bit-vector semantics, we instantiate benchmark variables as 32-, 64-, and 128-bit vectors.
\revise{At each selected width $b$, we translate original integer values (including template rows and bounds) into unsigned bit-vectors of width $b$. Thus, overflow is intentional and models program behavior with fixed-width integer types.}

To ensure a meaningful evaluation, \revise{\delete{we curated a subset of non-trivial benchmarks by retaining} each benchmark suite used in this paper retains} only instances in which more than half of the components in $F(X)$ have a non-trivial invariant (i.e., distinct from $[\min, \max]$). \revise{\delete{This process yielded 195 instances for the interval domain and 120 for the octagon domain.} This criterion yields the evaluation suites used in this paper: 195 interval instances, 120 octagon instances, and 69 sparse template-polyhedra instances. The criterion is meant to avoid tables dominated by trivial top invariants, since many of the translated benchmarks have trivial variable ranges.}

\smallskip
\noindent \textbf{Baselines}.
We implement the algorithms in this paper as a tool, \ToolName, which takes Constraint Horn Clause (CHC) files as input and outputs minimal inductive loop invariants. By default, we consider the Init and Inductiveness conditions to infer query-independent invariants (\cref{subsec:inv}). 

RQ1--RQ3 focus on \emph{exact BII synthesis} in a fixed abstract domain. Accordingly, we compare against symbolic-abstraction procedures that target BIIs. Table~\ref{tab:algorithms} summarizes the synthesis baselines used in RQ1--RQ3. \revise{For each benchmark, all evaluated synthesis algorithms receive the same template vector $F$, bit-widths, ordering semantics, and transition encoding; thus, they compute the BII in the same fixed abstract domain.}

RQ4 addresses a different question: whether BIIs improve \emph{downstream verification}. For this purpose, we compare against established verifiers with different proof mechanisms with a hybrid verifier that combines $k$-induction with EFBII(G) \revise{(see Table~\ref{tab:algorithms})}.

\begin{table}[t]
\centering
\caption{A summary of the evaluated algorithms.}
\label{tab:algorithms}
\resizebox{0.85\textwidth}{!}
{
\begin{tabular}{ l   l  }
\toprule
\textbf{Algorithm} & \textbf{Description} \\
\midrule
CIBII(BS)  & \cref{subsec:bii:exiting}, Algorithm~\ref{alg:bii:symabs} with the binary search approach~\cite{yao2021program} for \revise{\delete{$\widehat{f}(a)$} $f^\star(a)$}  \\
CIBII(Bi)  & \cref{subsec:bii:exiting}, Algorithm~\ref{alg:bii:symabs} with the state-of-the-art approach~\cite{thakur2012bilateral} for \revise{\delete{$\widehat{f}(a)$} $f^\star(a)$}  \\
EFBII(Lin) & \cref{sec:framework}, Algorithm~\ref{alg:linear-search} implemented on our framework\\
EFBII(G)   & \cref{sec:design}, Algorithm~\ref{alg:bitwise-greedy} + Algorithm~\ref{alg:advanced-refine} implemented on our framework\\
\bottomrule
\end{tabular}
}
\end{table}



\smallskip
\noindent \textbf{Environment}.
Our experiments are conducted on a machine equipped with an Intel(R) Xeon(R) Platinum 8176 CPU and 512 GB of RAM, running Ubuntu 22.04. A 60-second timeout is imposed for each synthesis task unless otherwise specified.

\subsection{The Overall Performance Comparison (RQ1)}
\label{subsec:eval:performance}

\begin{table}[t]
\caption{Algorithm details on the commonly solved instances of the four algorithms (all bit sizes combined).}
\label{tab:commonset_all}
\resizebox{0.85\textwidth}{!}
{
\begin{tabular}{l c c c c c}
\toprule
\textbf{Algorithm} & \textbf{Total (s)} & \textbf{Avg (s)} & \textbf{Total Checks} & \textbf{Avg Checks} & \textbf{Time/Check(ms)} \\
\midrule
\multicolumn{6}{c}{\textbf{Interval Domain (66 instances)}} \\
CIBII(BS) & 234.44 & 3.55 & 1,287,379 & 19505.74 & 0.18 \\
CIBII(Bi) & 217.78 & 3.30 & 633,592 & 9599.88 & 0.34 \\
EFBII(Lin) & 59.49 & 0.90 & 3,998 & 60.58 & 14.88 \\
EFBII(G) & 27.14 & 0.41 & 1,177 & 17.83 & 23.06 \\
\midrule
\multicolumn{6}{c}{\textbf{Octagon Domain (20 instances)}} \\
CIBII(BS) & 180.27 & 9.01 & 963,706 & 48185.30 & 0.19 \\
CIBII(Bi) & 123.13 & 6.16 & 269,137 & 13456.85 & 0.46 \\
EFBII(Lin) & 30.58 & 1.53 & 364.00 & 18.20 & 84.01 \\
EFBII(G) & 6.86 & 0.34 & 198.00 & 9.90 & 34.67 \\
\midrule
\multicolumn{6}{c}{\textbf{\revise{Sparse Template Polyhedra Domain (24 instances)}}} \\
\revise{CIBII(BS)} & \revise{100.77} & \revise{4.20} & \revise{397,266} & \revise{16552.75} & \revise{0.25} \\
\revise{CIBII(Bi)} & \revise{78.55} & \revise{3.27} & \revise{134,642} & \revise{5610.08} & \revise{0.58} \\
\revise{EFBII(Lin)} & \revise{93.62} & \revise{3.90} & \revise{1,267} & \revise{52.79} & \revise{73.89} \\
\revise{EFBII(G)} & \revise{47.03} & \revise{1.96} & \revise{304} & \revise{12.67} & \revise{154.70} \\
\bottomrule
\end{tabular}
}
\end{table}

\begin{figure*}[t]
	\centering 
	\begin{subfigure}{0.33\textwidth}
	\centering 
		\includegraphics[width=\linewidth]{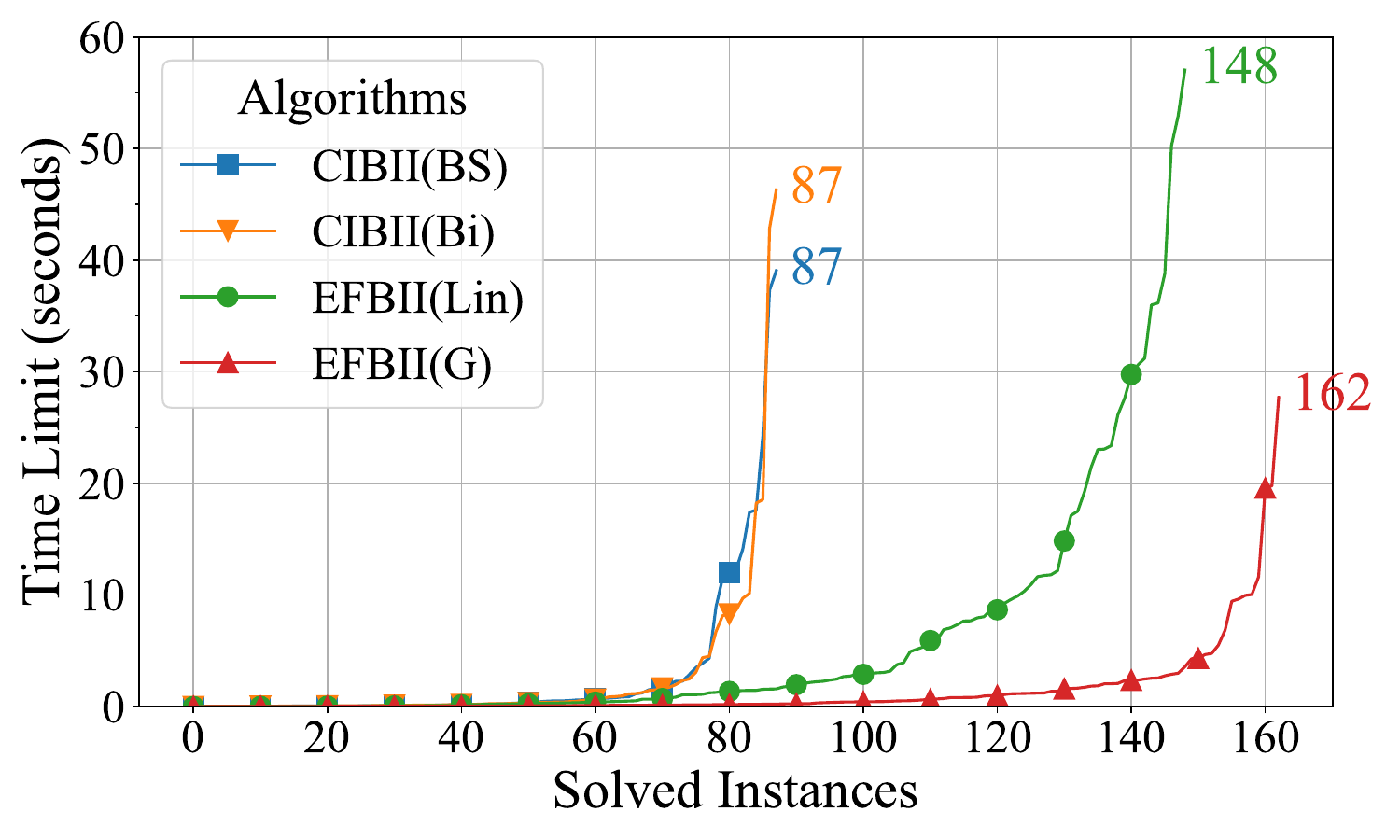}
		\caption{Interval domain}
        \label{fig:cactus:int}
	\end{subfigure}\hfil
	\begin{subfigure}{0.33\textwidth}
	\centering 
		\includegraphics[width=\linewidth]{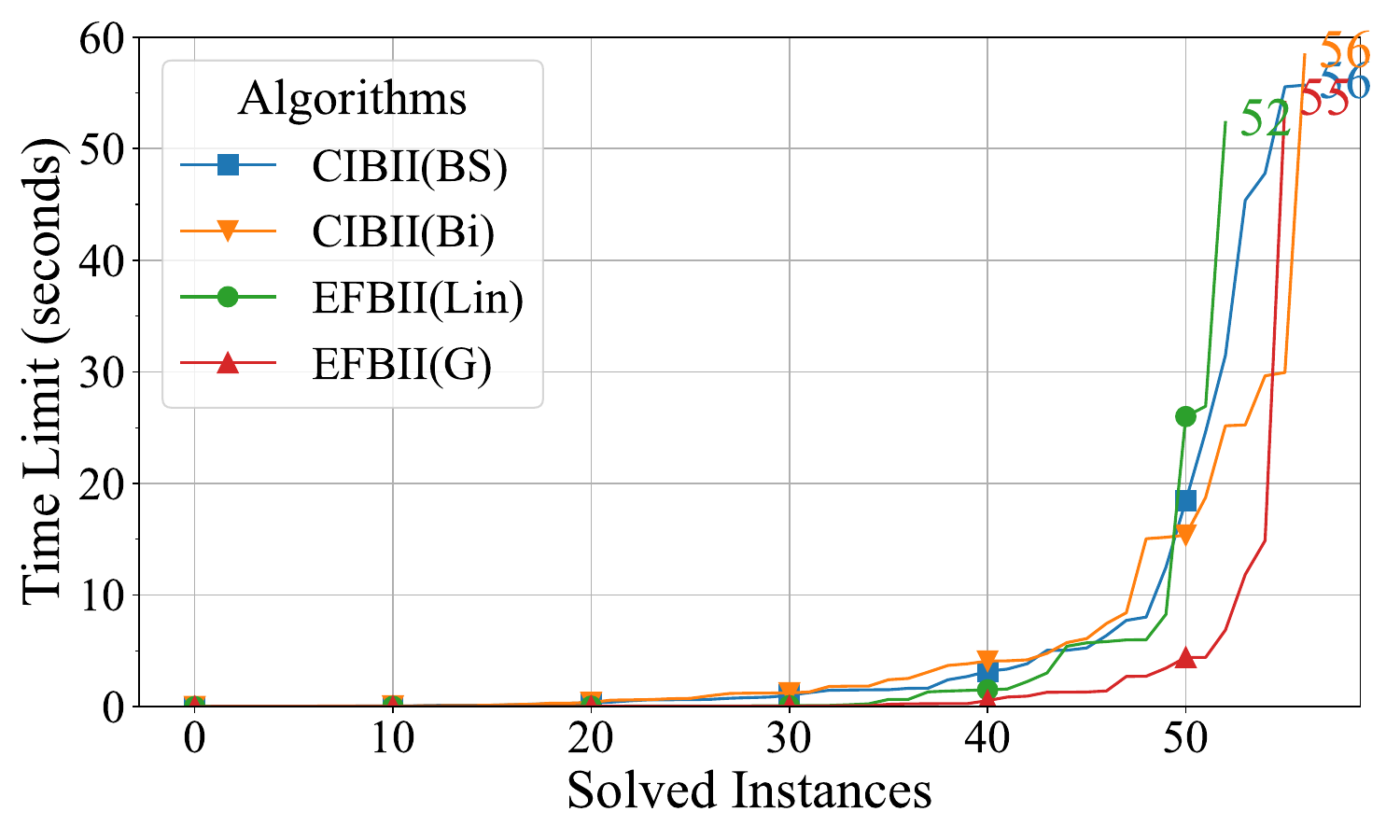}
		\caption{Octagon domain}
        \label{fig:cactus:oct}
	\end{subfigure}\hfil
	\begin{subfigure}{0.33\textwidth}
	\centering
		\includegraphics[width=\linewidth]{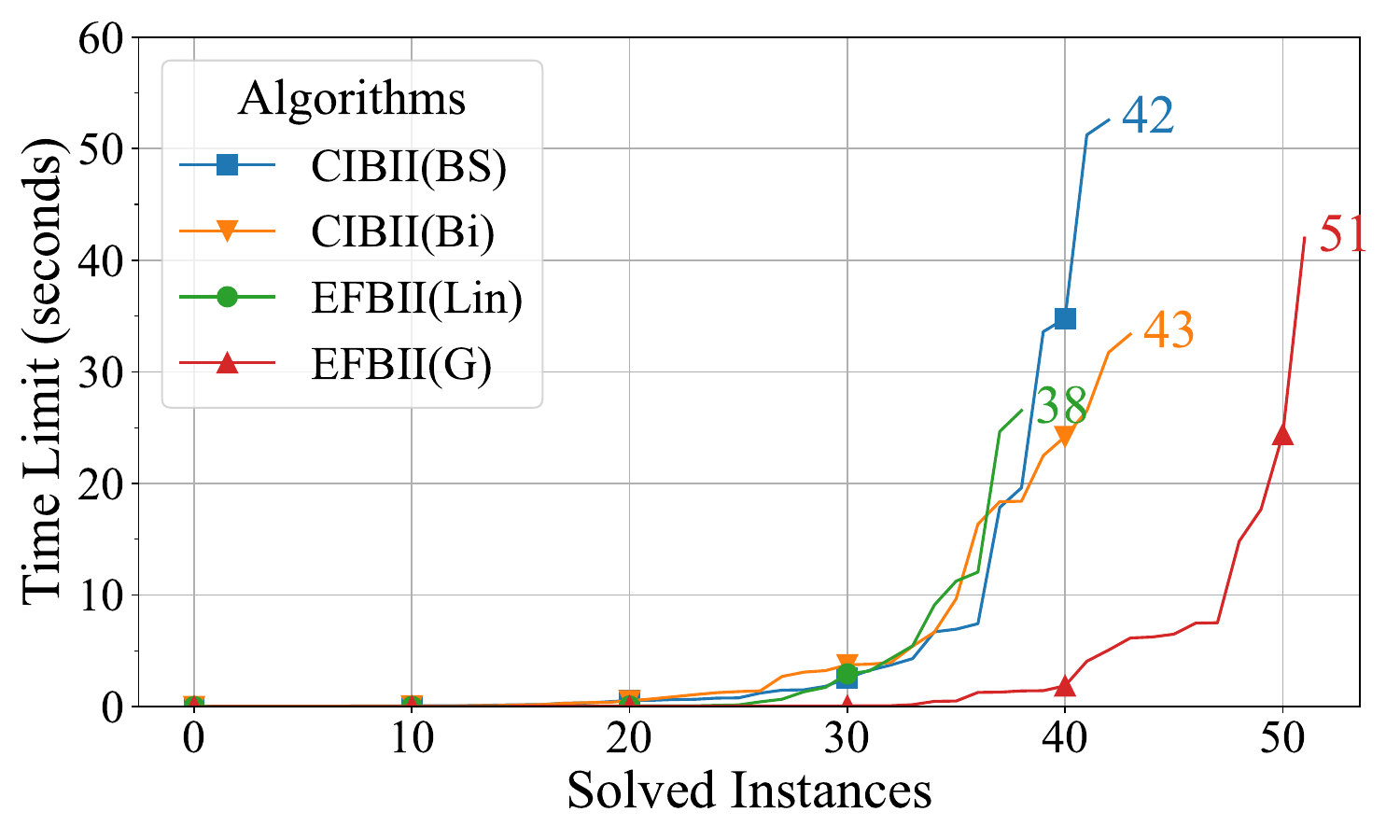}
		\caption{\revise{Template polyhedra}}
        \label{fig:cactus:poly}
	\end{subfigure}\hfil
	\caption{Cactus plot comparing the performance of algorithms on the \revise{\delete{interval and octagon domains} interval, octagon, and a sparse template polyhedra domain}, showing the cumulative number of instances solved within a time limit.}
    \label{fig:cactus}
\end{figure*}

Figure~\ref{fig:cactus} presents the cactus plots for the \revise{\delete{interval and octagon domains} interval, octagon, and an example sparse template-polyhedra domain}, illustrating the cumulative number of solved instances \revise{(i.e. instances that the BII is computed within the timeout)} over time. Table~\ref{tab:commonset_all} details the runtime statistics on the subset of instances solved by all algorithms.

\smallskip \noindent \textbf{The Symbolic Abstraction Approach}. The two chaotic iteration-based algorithms, CIBII(BS) and CIBII(Bi), demonstrate similar efficacy in terms of the number of solved instances. In both the interval and octagon domains, they solved the same number of instances before timing out.

While CIBII(Bi) requires fewer solver calls on average than CIBII(BS) (e.g., roughly half as many in the interval domain), both methods are constrained by the iterative nature of fixed-point computation. This limitation prevents them from scaling to harder instances in the interval domain, as evidenced by the sharp vertical asymptotes in Figure~\ref{fig:cactus:int}. Given its better performance, we use CIBII(Bi) as the primary baseline for the following comparisons.

\smallskip
\noindent \textbf{Symbolic Abstraction vs. Our Approach}.
The constraint-based approach, represented by EFBII(G), significantly outperforms the CIBII baselines in the interval domain and demonstrates superior efficiency in the octagon domain, despite a slightly lower total number of solvable instances.

\emph{Solvability}: In the interval domain, the advantage is decisive. EFBII(G) solves 162 instances, nearly doubling the solvability of the CIBII methods, which plateau at 87. The linear search strategy, EFBII(Lin), also performs well, solving 148 instances. In the octagon domain, the results are mixed. The symbolic abstraction methods achieve the highest solvability (56 instances), while EFBII(G) solves 55 instances, and EFBII(Lin) solves 52 instances. Although EFBII(G) solves one fewer instance than the baseline, Figure~\ref{fig:cactus:oct} reveals that it is significantly faster for the vast majority of cases, maintaining a lower time curve before reaching its limit. This result suggests that the algorithms favor different problem structures: CIBII methods are effective for relational constraints, where symbolic abstraction can exploit specific geometric properties, whereas EFBII(G) is more general, excelling where bit-level precision and rapid search pruning are required. \revise{In the fixed sparse template-polyhedra domain, EFBII(G) also achieves the highest solvability, solving 51 instances compared with 42 for CIBII(BS), 43 for CIBII(Bi), and 38 for EFBII(Lin).}

\emph{Efficiency}: The efficiency advantage of EFBII(G) is starkly evident on the subset of benchmarks solved by all algorithms (Table~\ref{tab:commonset_all}). In the interval domain (66 common instances), EFBII(G) requires only 27.14 seconds, achieving an $8.0\times$ speedup over CIBII(Bi), which takes 217.78 seconds. In the octagon domain (20 common instances), the gap widens further: EFBII(G) finishes in just 6.86 seconds, representing an $17.9\times$ speedup over CIBII(Bi) (123.13s). \revise{The same pattern holds in the fixed sparse template-polyhedra domain: on its 24 commonly solved instances, EFBII(G) reduces the number of checks from 134,642 for CIBII(Bi) to 304 and reduces runtime from 78.55s to 47.03s.}

A critical algorithmic trade-off is also revealed: although the bitwise greedy queries in EFBII(G) are computationally heavier ($\sim$23ms/check) than symbolic abstraction queries ($\sim$0.34ms/check), the logarithmic convergence of the bitwise strategy drastically reduces the volume of necessary checks. For instance, in the interval domain, EFBII(G) reduces the check count from 633,592 to just 1,177. This allows EFBII(G) to achieve superior overall runtime and successfully solve harder instances that iterative CIBII approaches fail to solve. \revise{Its higher time per check in the sparse template-polyhedra domain (154.70ms/check) indicates that the larger relational templates make each quantified query more expensive. These results concern the evaluated fixed sparse-template domain, rather than unrestricted polyhedral analysis.}

\subsection{An In-Depth Analysis of Algorithm Performance (RQ2)}
\label{subsec:eval:in-depth}

To understand the factors driving the performance differences in RQ1, we analyze the algorithms' sensitivity to the bit-width of program variables and the complexity of the target invariant on the interval domain.

\begin{figure*}[t]
     \centering
	\begin{subfigure}{0.25\textwidth}
	\centering 
		\includegraphics[width=\linewidth]{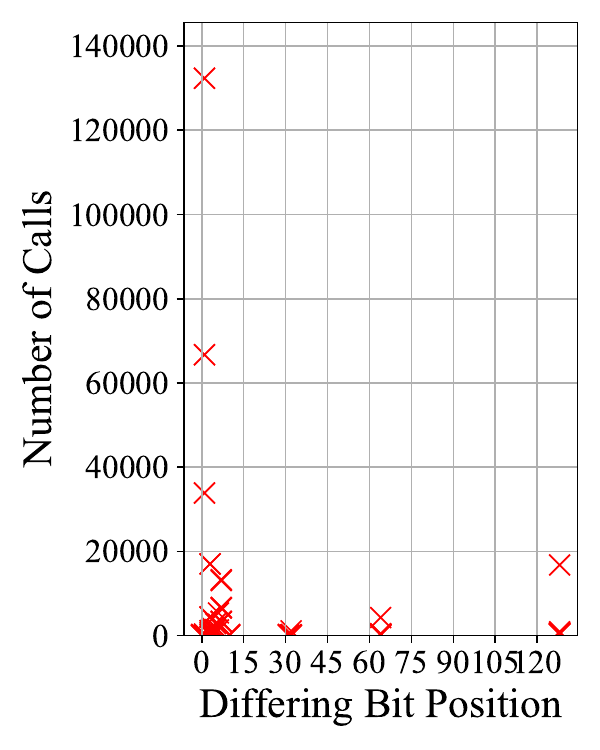}	
		\caption{CIBII(Bi)}
        \label{fig:DifferBitCall:Bi}
	\end{subfigure}\hfil
	\begin{subfigure}{0.25\textwidth}
	\centering 
		\includegraphics[width=\linewidth]{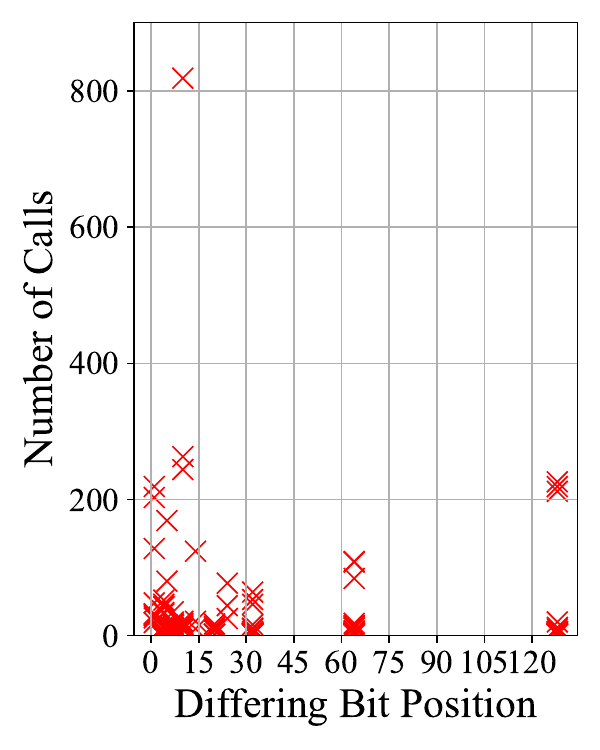}	
		\caption{EFBII(Lin)}
        \label{fig:DifferBitCall:Lin}
	\end{subfigure}\hfil
	\begin{subfigure}{0.25\textwidth}
	\centering
		\includegraphics[width=\linewidth]{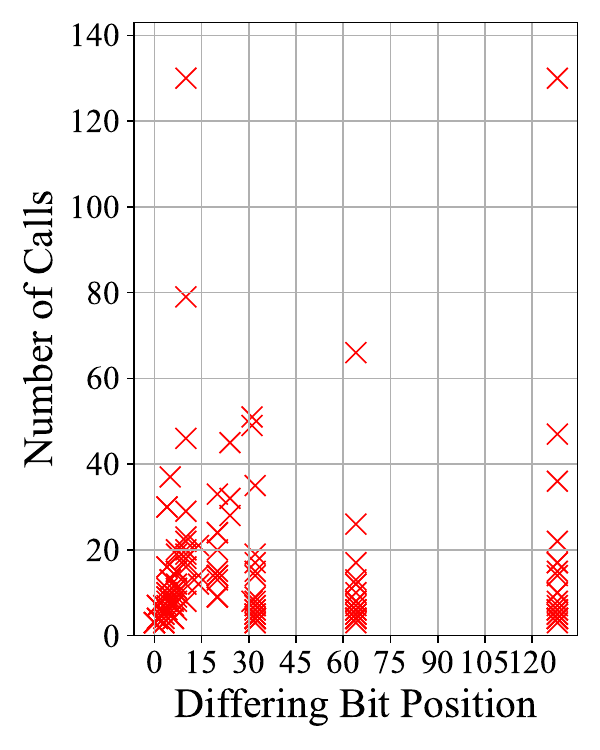}	
		\caption{EFBII(G)}
        \label{fig:DifferBitCall:G}
	\end{subfigure}\hfil 
      \caption{Count of solver calls on different invariant sizes (represented by the \revise{\delete{most significant differ bit} maximum differing bit position across all rows of the resulting BII}).}
      \label{fig:DifferBitCall}
\end{figure*}

\smallskip
\noindent \textbf{Sensitivity to Bit-Width}. Table~\ref{tab:width} details the performance of the algorithms on a subset of interval domain instances commonly solved across 32-bit, 64-bit, and 128-bit configurations.
\begin{itemize}
\item \textit{CIBII(Bi)} scales poorly due to an explosion in \textit{iteration count}. As the width increases from 32 to 128 bits, the number of symbolic abstraction checks more than quadruples (from $\sim$86,000 to $\sim$373,000). \revise{However, because chaotic iteration is relatively insensitive to the enlarged bit-vector search space, the average cost per check remains nearly stable (from 0.30ms to 0.38ms).} Consequently, the total runtime increases by $5.4\times$ (25.71s to 139.91s).

\item \textit{EFBII(Lin)} suffers from a compound effect: both the number of checks and the cost per check increase noticeably. The check count grows by $\sim3.4\times$ as the search space expands \revise{by $4\times$}, raising the average time per check from 3.31ms to 15.06ms. These two factors combine to produce \revise{\delete{the most} a} severe degradation, increasing total runtime by over $15\times$ (from 2.29s to 35.74s).

\item \textit{EFBII(G)} demonstrates the most robust \revise{\delete{algorithmic} scaling in the number of checks}. The number of checks increases by only $1.7\times$ (from 255 to 429) as the bit-width quadruples\revise{, compared with a $4\times$ increase on bit-width}. \revise{\delete{Although the cost per check increases significantly (reaching 64.42ms for 128-bit constraints), the drastically lower check count allows EFBII(G) to finish in 27.63s—nearly $5\times$ faster than CIBII(Bi) on 128-bit tasks.} However, the average cost per check increases from 6.84ms to 64.42ms, causing the total runtime to increase by $15.9\times$ (from 1.74s to 27.63s), compared with $5.4\times$ for CIBII(Bi). Nevertheless, its drastically lower check count allows EFBII(G) to remain approximately $5.1\times$ faster than CIBII(Bi) on 128-bit tasks.}
\end{itemize}

\smallskip
\noindent \textbf{Impact of Invariant Size}.
We further examine how the complexity of the target invariant affects performance by plotting solver calls against the \revise{\delete{position of the most significant differing bit, which serves as a proxy for the size of the invariant (Figure~\ref{fig:DifferBitCall}).} maximum differing-bit position across its rows (Figure~\ref{fig:DifferBitCall}). For a BII with rows $l_i \leq f_i(X) \leq u_i$, we define this metric as \revise{$\max_i \{k \mid (l_i \oplus u_i)_k = 1\} + 1$}, i.e., the maximum 1-based position at which a row's lower and upper bounds differ, and use it as a coarse proxy for the magnitude of the bounds. A value of zero means that every row is a singleton interval.} The results illustrate distinct behavioral profiles for each strategy. CIBII(Bi) proves highly unstable, with some instances requiring massive solver calls even for small invariant sizes, indicating that fixed-point iteration struggles to converge efficiently even on shallow invariants. In contrast, EFBII(Lin) reveals a linear dependence in which the search effort scales. EFBII(G) demonstrates remarkable stability: its call count remains uniformly low (under 140) regardless of the invariant size. This visually confirms the\revise{\delete{logarithmic}} efficiency of the bitwise greedy strategy, which navigates to the target invariant with a predictable and minimal number of steps, regardless of the magnitude of the bounds.

\begin{table}[t]
\centering
\caption{Performance of 3 main algorithms on different bit-widths (60s timeout). The statistics are based on the \revise{\delete{25} 22} commonly solved instances across all methods and bit-widths in the interval domain.}
\label{tab:width}
\resizebox{0.8\textwidth}{!}
{
\begin{tabular}{l l c c c c}
\toprule
\textbf{Algorithm} & \textbf{Width} & \textbf{\#Checks} & \textbf{Avg Checks} & \textbf{\#Time} & \textbf{Time/Check(ms)} \\
\midrule
\multirow{3}{*}{CIBII(Bi)} & 32 & 85,763 & 3898.32 & 25.71 & 0.30 \\
& 64 & 175,228 & 7964.91 & 52.15 & 0.30 \\
& 128 & 372,601 & 16936.41 & 139.91 & 0.38 \\
\midrule
\multirow{3}{*}{EFBII(Lin)} & 32 & 691 & 31.41 & 2.29 & 3.31 \\
& 64 & 933 & 42.41 & 13.10 & 14.04 \\
& 128 & 2,374 & 107.91 & 35.74 & 15.06 \\
\midrule
\multirow{3}{*}{EFBII(G)} & 32 & 255 & 11.59 & 1.74 & 6.84 \\
& 64 & 268 & 12.18 & 6.25 & 23.32 \\
& 128 & 429 & 19.50 & 27.63 & 64.42 \\
\bottomrule
\end{tabular}
}
\end{table}

\subsection{Ablation Study of the EFBII Algorithm (RQ3)}
\label{subsec:eval:ablation}

To quantify the impact of our design decisions, we evaluate the evolution of the propose-and-refine framework on the interval domain by comparing the linear search baseline (EFBII(Lin)) against three progressive stages of the bitwise greedy strategy: the foundational bitwise greedy strategy (EFBII(G-Base)), the intermediate version incorporating boundary limits for pruning (EFBII(G-BL)), and the fully optimized algorithm equipped with bounded leap (EFBII(G)).

The results are visualized in Figure~\ref{fig:binlifts}, while the statistics on 123 commonly solved instances are summarized in Table~\ref{tab:BinLiftsRes}.

\begin{figure*}[t]
    \begin{minipage}[c]{0.54\textwidth}
        \centering
        \captionof{table}{Comparing EFBII(G) across different configurations. Time and check statistics reflect 123 commonly solved instances.}
        \label{tab:BinLiftsRes}
        \resizebox{\linewidth}{!}{
           \begin{tabular}{lccccc}
            \toprule
            \textbf{Algorithm} &      \textbf{Total(s)} &       \textbf{Avg(s)} &
            \textbf{\# Calls} &       \textbf{Avg Calls} &       \textbf{Time/Call(ms)} \\
            \midrule
            EFBII(Lin) & 95.7 & 0.78 & 4,714 & 38.3 & 20.31 \\
            EFBII(G-Base) & 731.4 & 5.95 & 10,308 & 83.8 & 70.95 \\
            EFBII(G-BL) & 352.0 & 2.86 & 9,766 & 79.4 & 36.04 \\
            EFBII(G) & 87.1 & 0.71 & 1,684 & 13.7 & 51.75 \\
            \bottomrule
            \end{tabular}
        }
    \end{minipage}\hfill
    \begin{minipage}[c]{0.44\textwidth}
        \centering
        \includegraphics[width=\linewidth]{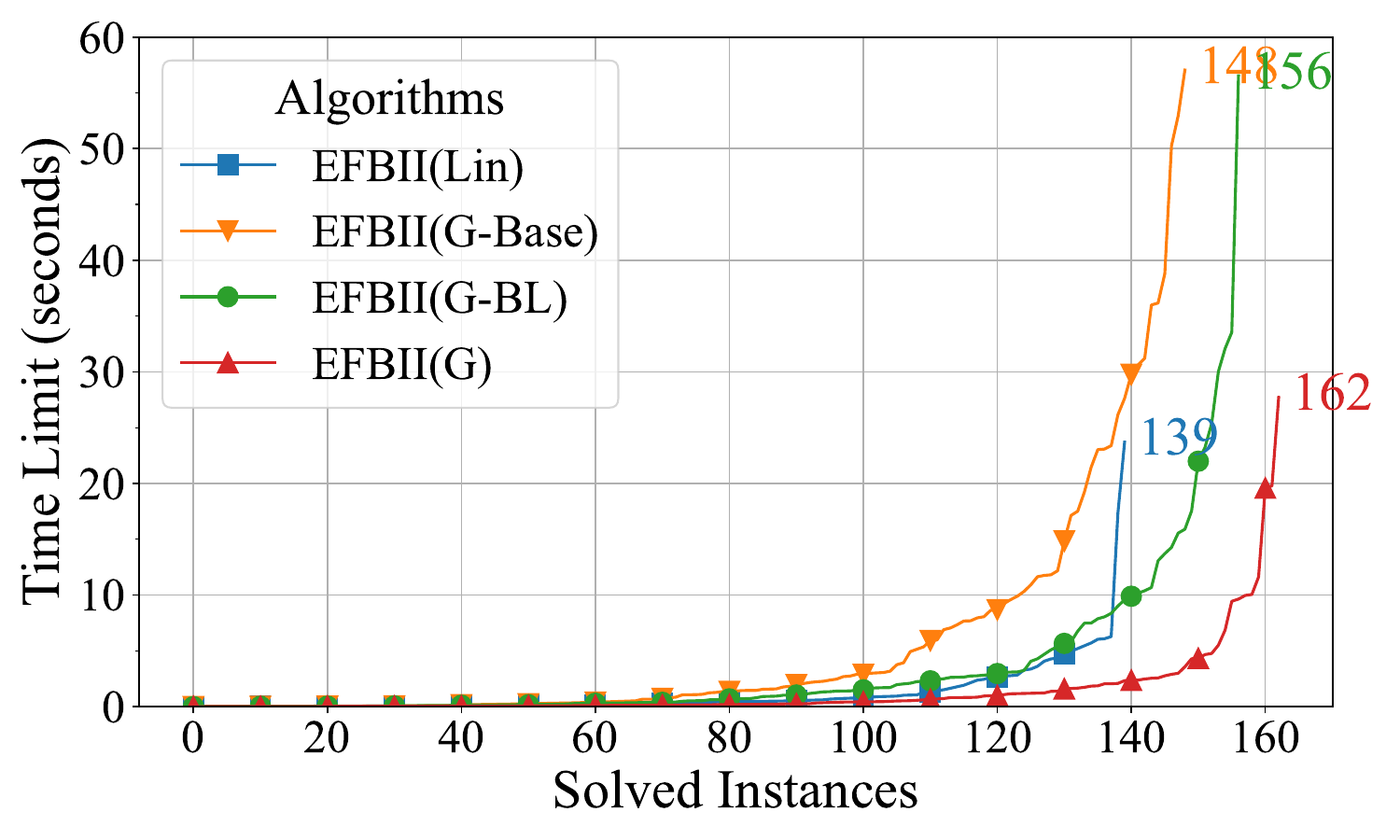}
        \captionof{figure}{Cumulative invariant inference time of EFBII variations.}
        \label{fig:binlifts}
    \end{minipage}
\end{figure*}

\smallskip \noindent \textbf{The Efficiency-Solvability Trade-off (Linear vs. G-Base)}. The comparison between EFBII(Lin) and EFBII(G-Base) reveals a distinct trade-off. In terms of solvability, EFBII(G-Base) is more robust, solving 148 instances compared to 139 for EFBII(Lin). This confirms that the bitwise strategy is necessary to handle the harder, large-width instances where linear search fails. However, on the set of commonly solved (and generally simpler) instances, EFBII(G-Base) is inefficient. It requires 731.4 seconds to solve the same subset that EFBII(Lin) clears in just 95.7 seconds. This is due to the heavy overhead of bitwise queries: the complex bit-masking constraints increase the solver time per check to 70.95ms, compared to just 20.31ms for simple bounds in EFBII(Lin). Without optimization, EFBII(G-Base) overcomplicates simple problems and performs significantly worse than a naive linear descent.

\smallskip\noindent \textbf{Impact of Boundary Limits}.
Incorporating boundary limits (under-approximation) in EFBII(G-BL) significantly improves both metrics. It pushes solvability to 156 instances and cuts the total runtime on the common set by more than half (731.4s to 352.0s). The pruning effectively filters out the most computationally expensive invalid queries, reducing the average time per check from 70.95ms to 36.04ms.

\smallskip \noindent \textbf{Impact of Bounded Leap}. Bounded-leap optimization is the critical factor that enables the bitwise strategy to outperform the linear baseline on the commonly solved set. It effectively mimics the efficiency of linear search on simple sub-problems, reducing the total number of calls from 9,766 to just 1,684. As a result, the fully optimized EFBII(G) achieves the best of both worlds: it achieves the highest solvability (162 instances) and is the fastest algorithm on the common set (87.1s), finally surpassing EFBII(Lin) (95.7s) in raw speed.

This study confirms that while the bitwise strategy provides the scalability for hard problems, it incurs a heavy penalty on simple ones. The bounded leap optimization is strictly necessary to bridge this gap, ensuring the solver remains efficient across the entire complexity spectrum.

\subsection{Enhancing $k$-Induction using Auxiliary Invariants (RQ4)}
\label{subsec:eval:enhance}
To evaluate the practical benefits of our approach, we conduct an experiment that integrates our best-performing algorithm, EFBII(G), with a $k$-induction engine. We assess whether supplying Best Inductive Invariants (BII) improves verification performance in two scenarios: first, by feeding pre-computed invariants of our algorithm to a standard $k$-induction loop; and second, by creating a hybrid solver that interleaves EFBII(G) steps with $k$-induction. \revise{The experiments use the full set of 195 benchmarks on the interval domain.}

\smallskip
\noindent \textbf{Assisting $k$-Induction via Partial and Optimal Results}.
\revise{We call the result of one refinement iteration of EFBII(G) a partial invariant: it is a sound inductive interval invariant produced before the search has established optimality, and it may be weaker than the BII. The optimal invariant is the BII obtained after the search terminates.} We supply both partial \revise{\delete{(derived from a single iteration of EFBII(G))}} and optimal \revise{\delete{(fully computed)}} invariants on the interval domain to a $k$-induction engine running with a 30-second timeout. The results, summarized in Table~\ref{tab:kind}, demonstrate that the provided invariants significantly strengthen verification capability.

\begin{itemize}
\item \textit{Increased Proof Rate}. Raw $k$-induction proves 125 instances. Supplying an optimal invariant increases the number of solved cases to 137, reducing the number of unprovable instances to 58—a 17\% reduction in failed proofs. Partial invariants also yielded a benefit, solving 2 more instances than the raw $k$-induction.

\item \textit{Lower Inference Depth}. The optimal invariants enable the engine to prove properties at shallower induction depths. For example, at $k=16$, the raw engine solves 121 cases, while the engine equipped with optimal invariants solves 133. The extra precision is useful when the downstream prover cannot arbitrarily increase the number of iterations.

\item \textit{Time Reduction}. The auxiliary invariants provide substantial speedups. The optimal invariants achieve speedups ranging from 1.04$\times$ to 1.19$\times$ across different depths. Notably, the partial invariants also deliver comparable speedups, highlighting the practical value of \revise{\delete{EFBII (G) 's} EFBII(G)'s} ``anytime'' nature.

\end{itemize}




\noindent \textbf{Comparison with other Verifiers}.
To further assess the utility of BIIs, we build a hybrid solver that alternates between 16 steps of $k$-induction and one \revise{\delete{interval-refinement step using EFBII(G)} EFBII(G) propose-and-refine iteration}. We denote this solver by ``$k$-EFBII(G)'' and compare it against four baselines under the same 60-second timeout: raw $k$-induction, Z3's Spacer engine (IC3/PDR-based), Eldarica~\cite{hojjat2018eldarica} (a CEGAR-based CHC solver that uses bit-vector interpolation~\cite{backeman2021interpolating} for refinement), and clause2inv\cite{Cao25Clause2Inv} (a state-of-the-art LLM-assisted verify engine). At this timeout, Eldarica achieves the highest final coverage with 142 solved instances, followed closely by $k$-EFBII(G) with 137. The hybrid closes most of the gap to the strongest baseline, reaching 96.5\% of Eldarica's solved count.

\emph{Time Analysis}. The hybrid dominates the front of the cactus curve. Within 1 second, $k$-EFBII(G) already solves 131 instances, compared with 124 for plain $k$-induction, 85 for Spacer, 13 for Eldarica, and none for clause2inv. In contrast, Eldarica obtains much of its advantage later in the run, increasing from 126 solved instances at 5 seconds to 141 at 20 seconds and 142 at 60 seconds. These data indicate that BII-guided refinement primarily improves \emph{time-to-proof}. Importantly, the hybrid never loses an instance that plain $k$-induction can prove within 60 seconds; the gain is strictly monotone.

\emph{Overlap Analysis}. The hybrid and Eldarica solve 130 common instances within 60 seconds, and in 129 of these instances, $k$-EFBII(G) is faster. Nevertheless, Eldarica still solves 12 instances that the hybrid misses; conversely, the hybrid uniquely solves 7 instances that Eldarica does not prove. Analysis on these instances suggests genuine complementarity: interpolation-driven CEGAR remains stronger on structurally difficult cases, while exact interval BIIs are highly effective on loops with strong numeric regularity. Notably, $k$-induction and $k$-EFBII(G) are also the only methods that solve any nonlinear benchmark, thereby discharging one NIA instance that other tools fail to prove.

\begin{figure*}[t]
    \begin{minipage}[c]{0.58\textwidth}
    \centering
    \captionof{table}{The impact of using EFBII(G) to assist $k$-induction. The Speedup is calculated from commonly solved instances.}
    \label{tab:kind}
    \resizebox{\linewidth}{!}{
        \begin{tabular}{lcccccc}
        \toprule
        \multirow{2}{*}{\textbf{Invariant Type}} & \multicolumn{4}{c}{\textbf{Proved cases in $k$ steps}} & \multirow{2}{*}{\textbf{Unprovable}} \\
        & 1 & 4 & 16 & 64 & \\
        \midrule
        Raw $k$-induction & 90 & 106 & 121 & 125 & 70 \\
        \midrule
        Partial Inv. & 90 & 106 & 123 & 127 & 68 \\
        Speedup & 1.034$\times$ & 1.196$\times$ & 1.173$\times$ & 1.148$\times$ & -- \\
        \midrule
        Optimal Inv. & 90 & 112 & 133 & 137 & 58 \\
        Speedup & 1.035$\times$ & 1.190$\times$ & 1.178$\times$ & 1.157$\times$ & -- \\
        \bottomrule
        \end{tabular}
    }
    \end{minipage}\hfill
    \begin{minipage}[c]{0.4\textwidth}
        \centering
        \includegraphics[width=\linewidth]{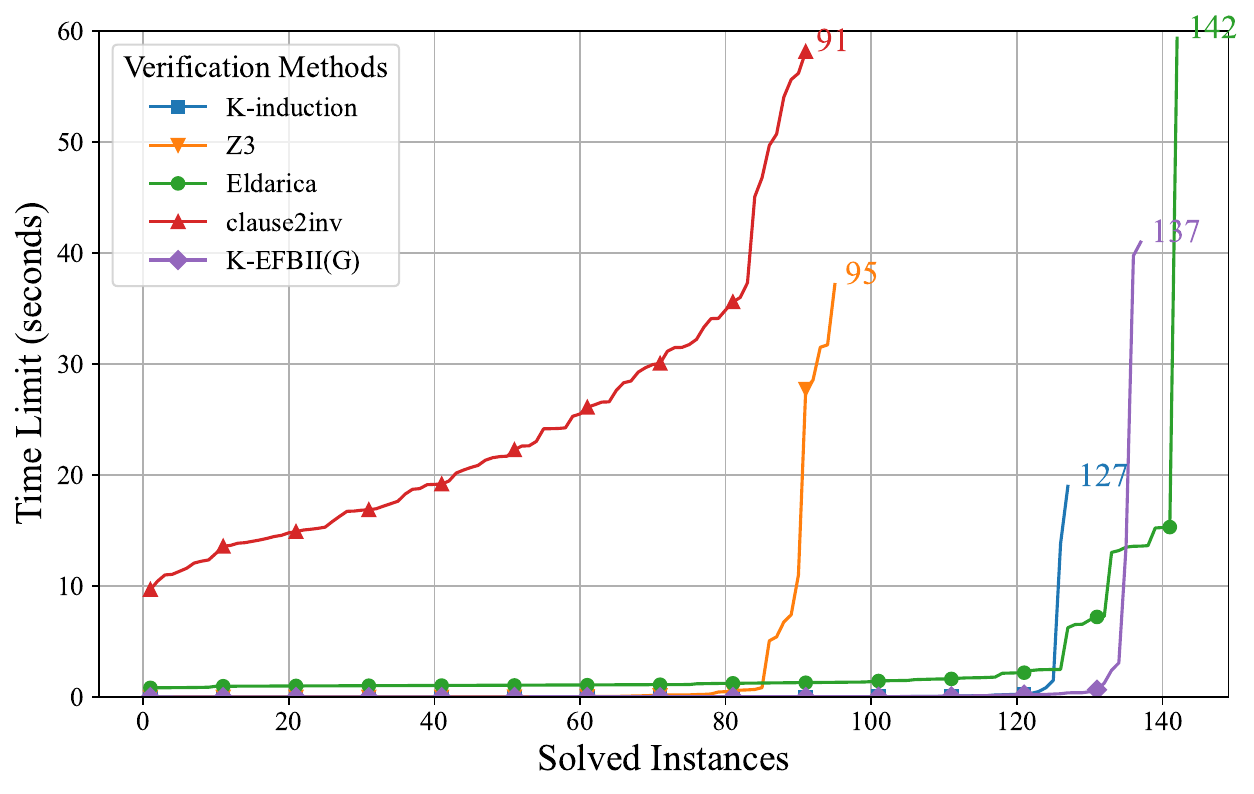}
        \captionof{figure}{Verification performance of the mixed $k$-induction and EFBII(G) strategy compared to other methods.}
        \label{fig:verify}
    \end{minipage}
\end{figure*}

\subsection{Discussions}
\label{sec:discuss}

\noindent \textbf{Applicability of Best Inductive Invariants}.
Best Inductive Invariants (BII) provide a principled approach for computing minimal inductive invariants within a given abstract domain.
In abstract interpretation, \revise{\delete{BII refines key operations, such as semantic reduction and domain combination.} BIIs can make the results of semantic reduction and domain combination more precise.}
Beyond abstract interpretation, BII can enhance other verification techniques, including the generation of auxiliary invariants for $k$-induction, as demonstrated in our evaluation. While prior work has explored the role of invariants in $k$-induction~\cite{beyer2015boosting}, existing approaches often rely on non-optimal invariants. 
Furthermore, the optimality of BII is particularly advantageous in compositional verification~\cite{gupta2008automated,calcagno2009compositional},
where imprecise invariants can propagate across components.

\smallskip
\noindent \textbf{Generalization to More Abstract Domains}. 
While our evaluation focuses on the \revise{\delete{interval and octagon domains} interval, octagon, and sparse template-polyhedra domains}, the proposed techniques can be generalized to a broader class of abstract domains. In the parity domain, for example, BII computes minimal congruences to track variable parities, thereby improving reasoning about modular arithmetic. Similarly, applying BII to zones, octagons, and affine relation domains over bit-vector arithmetic~\cite{elder2014abstract} could enhance the precision of analyzing modular linear constraints in low-level programs. In bit-level domains such as ``known bits,'' BII can infer precise must-information about individual bits (e.g., definite 0, definite 1, or unknown), further refining bitwise reasoning.

\smallskip
\noindent \textbf{Optimality in Guess-and-Check Approach}. 
Many invariant synthesis techniques follow a guess-and-check paradigm: candidate invariants are guessed — such as by instantiating templates with concrete values — and checked iteratively. Most existing methods typically do not attempt to find the best invariants. 
Recent work has explored similar ideas for considering optimality ~\cite{DBLP:journals/pacmpl/KalitaMDRR22,DBLP:journals/pacmpl/ParkDR23}, but it uses two independent processes to verify soundness and optimality. 
In comparison, our framework integrates synthesis and optimality checking into a unified loop. A modular alternative would decouple these concerns by iteratively generating candidates and verifying their optimality, but may lack guidance for the candidate-generation process.

%% file: 7.related.tex
\section{Related Work}
\label{sec:related}

\noindent \textbf{Best Inductive Invariants}.
Research on best inductive invariants (BII) can be broadly divided into two categories: (1) studies that focus on synthesizing the best abstract transformer (BAT)~\cite{Graf97,Regehr04,reps2004symbolic,Yorsh04,brauer2011transfer,King10,Monniaux10,thakur2012bilateral,thakur2012method}, which indirectly solves the BII problem and (2) research that directly addresses the BII problem itself~\cite{Flanagan01,Yorsh06,Garoche12}.
\revise{The optimization perspective on invariant computation also includes policy iteration~\cite{costan2005policy}, strategy iteration~\cite{gawlitza2007precise}, and convex-optimization formulations~\cite{gawlitza2012abstract}. These methods motivate treating invariant inference as an optimization problem, whereas our setting is specifically BII synthesis on finite bit-vector template domains.}
Existing approaches to BAT can be broadly categorized into two classes. 
The first class employs SMT-based iterative algorithms. Within this class, some techniques adopt a ``bottom-up'' approach, iteratively constructing a sequence of increasingly weaker implicants until one is entailed by $\varphi$~\cite{reps2004symbolic}. In contrast, other techniques follow a ``top-down'' strategy, generating a sequence of progressively stronger implicants until no further strengthening is possible~\cite{thakur2012method,thakur2012bilateral}.
The second class of approaches reformulates the problem into other automated reasoning tasks, such as OMT solving~\cite{yao2021program,li2014symbolic} or quantifier elimination~\cite{brauer2011transfer}. 

While the BII problem is closely related to the BAT problem, solving the former does not necessarily require solving the latter. For instance, our approach avoids using BAT synthesizers in its sub-procedures. 
The computation of BIIs is also linked to the problem of achieving completeness in abstract interpretation.
\citet{giacobazzi1997completeness,giacobazzi2000making} provide a constructive framework for characterizing completeness in abstract interpretation. In this context, BIIs (when they exist) can serve as formal witnesses of completeness. However, our work does not address the problem of deciding whether BIIs exist for a given program and abstract domain. For the problem, we refer the readers to~\cite{giacobazzi2025best} for a more thorough discussion.

\smallskip
\noindent \textbf{Constraint-based Invariant Generation}.
Constraint-based invariant generation, also known as the template-based approach, formulates invariant inference as a constraint-solving problem over unknown parameters within a predefined template.  This approach has been successfully applied to generate both linear invariants~\cite{colon2003linear,sankaranarayanan2005scalable,gupta2009invgen} and non-linear invariants~\cite{kapur2006automatically,chatterjee2020polynomial,chen2015counterexample}.  
Our work is inspired by the approach but differs in several dimensions.
First, whereas prior methods aim to enumerate all valid invariants or derive sufficient invariants for verification, we focus on synthesizing the best inductive invariant.
Second, we target bit-vector programs, whereas most existing techniques are designed for integer or real arithmetic. Prior approaches often rely on domain-specific reductions, such as Farkas' lemma for linear arithmetic~\cite{sankaranarayanan2005scalable} or Ackermann's reduction for uninterpreted functions~\cite{beyer2007invariant}, which do not directly extend to bit-vector semantics.
Finally, most prior work is restricted to affine programs, where loop guards and variable updates are affine functions. In contrast, our approach handles nonlinear constructs. 
\revise{Constraint-solving perspectives on program analysis have also been surveyed by Gulwani et al.~\cite{pldi08bv}; related logic-based invariant-generation procedures include~\cite{kahsai2011instantiation,garoche2013incremental}.}
\revise{The relation to template-polyhedral domains is also direct: prior work studies generalized templates, template polyhedra with additional structure, and bilinear optimization~\cite{templatepoly2011generalizing,templatepoly2017twist,templatepoly2019bilinear}.}

\smallskip
\noindent \textbf{Analysis of Modular Arithmetic}. 
This discrepancy between mathematical integers and finite-precision integers has motivated significant research into abstract domains tailored for bit-vector arithmetic~\cite{gange15,mine:hal-00748094,Sharma2017,Simon2007}. 
Below, we summarize key approaches that address this issue.
Tools such as Astrée~\cite{blanchet2002design} and cccheck~\cite{fahndrich2010static} focus on detecting expressions that are guaranteed to avoid overflow or underflow while issuing warnings for expressions that may be unsafe. 
The wrapped interval domain~\cite{gange15} precisely represents overflow and underflow by modeling the cyclic nature of bit-vector arithmetic.
Recently, \citet{yoon2023inductive} optimize loop-free program synthesis by combining
unsigned interval, signed interval, and bitwise abstractions.
These methods rely on instruction-level abstract interpretation and do not produce optimal abstract transformers.
In contrast, symbolic abstraction provides a framework for computing optimal abstract transformers and has been successfully applied across a range of domains, including intervals~\cite{regehr2006deriving, brauer2011transfer}, sets~\cite{brauer2010automatic}, affine relations~\cite{elder2014abstract}, octagons~\cite{sharma2017sound}, and polyhedra~\cite{sharma2017sound, yao2021program}. 
Our algorithms do not rely on standard chaotic iteration or symbolic abstraction at each refinement step; they avoid computing the best abstract transformers over loop-free fragments as subprocedures. \revise{For downstream verification, our use of BII as auxiliary facts is related to invariant-strengthened $k$-induction~\cite{beyer2015boosting,brain2015safety,rocha2015model}. Specifically, Brain et al.~\cite{brain2015safety} also obtain logarithmic dependence on the finite value range via SMT-assisted binary search on one row. In contrast, our algorithm fixes the bound bits from high to low in joint-solver queries across all rows, so a single query can refine several rows simultaneously.}

\smallskip
\noindent \textbf{\revise{Optimal Program Synthesis}}.
\revise{Optimal program synthesis has been studied most extensively in the programming-by-example (PBE) setting. Prior work formalizes optimality via explicit cost functions~\cite{Bornholt2016, Feser2015, Schkufza2013}, allowing users to bias synthesis toward programs with desirable structural properties, e.g., minimal size.
Several approaches instead adopt a probabilistic notion of optimality. Menon et al.~\cite{menon2013pbe}, for example, define an optimal program as one with maximum likelihood under a probabilistic context-free grammar conditioned on the examples. This view prioritizes candidates that are statistically plausible given the observed input–output behavior.
In contrast, our notion of optimality is derived from the underlying abstract domain rather than from client-specified objectives. This abstraction-centric formulation decouples the synthesis procedure from application-specific cost models, enabling a uniform and reusable approach.}

%% file: 8.conclu.tex
\section{Conclusion}
\label{sec:conclusion}

In this paper, we have revisited the problem of synthesizing best inductive invariants and presented a formulation that avoids invoking best-abstract-transformer computation over loop-free fragments as a primitive and derives two algorithms from it: a linear-search procedure and a bitwise greedy method \revise{\delete{with logarithmic convergence} with a solver-call count linear in total template bit-width (equivalently, logarithmic in the finite value-space size of the template rows)}. An empirical evaluation over different abstract domains shows substantial performance improvements over symbolic-abstraction baselines. Furthermore, we demonstrate the practical applicability of our approach by integrating it with $k$-induction.
